\definecolor{webgreen}{rgb}{0,.5,0}
\definecolor{webbrown}{rgb}{.6,0,0}
\begin{document}

\begin{center}
\epsfxsize=4in
\end{center}

\theoremstyle{plain}
\newtheorem{theorem}{Theorem}
\newtheorem{corollary}[theorem]{Corollary}
\newtheorem{lemma}[theorem]{Lemma}
\newtheorem{proposition}[theorem]{Proposition}

\theoremstyle{definition}
\newtheorem{definition}[theorem]{Definition}
\newtheorem{example}[theorem]{Example}
\newtheorem{conjecture}[theorem]{Conjecture}

\theoremstyle{remark}
\newtheorem{remark}[theorem]{Remark}

\def\R{{\mathbb R}}
\def\Z{{\mathbb Z}}
\def\H{{\mathbb H}}
\def\C{{\mathbb C}}
\def\N{{\mathbb N}}
\def\G{{\mathbb G}}
\def\S{{\mathbb S}}
\def\F{{\mathbb F}}
\def\K{{\mathbb K}}
\def\QQ{{\mathbb Q}}
\def\cA{{\mathcal A}}
\def\cB{{\mathcal B}}
\def\cC{{\mathcal C}}
\def\cD{{\mathcal D}}
\def\cF{{\mathcal F}}
\def\cG{{\mathcal G}}
\def\cH{{\mathcal H}}
\def\cK{{\mathcal K}}
\def\cL{{\mathcal L}}
\def\cM{{\mathcal M}}
\def\cN{{\mathcal N}}
\def\cP{{\mathcal P}}
\def\cR{{\mathcal R}}
\def\cS{{\mathcal S}}
\def\cT{{\mathcal T}}

\def\bw{\mathbf{w}}
\def\bx{\mathbf{x}}
\def\by{\mathbf{y}}

\def \Q {{\bf Q}}
\def \RP {{\bf RP}}
\def \id {{\rm id}}
\def \e {\epsilon}
\def\f{f}
\def\card{\mathrm{Card}}

%\linenumbers

\begin{center}
\vskip 1cm{\LARGE\bf The constant of recognizability is computable 
\vskip .12in
for primitive morphisms}
\vskip 1cm
\large
Fabien Durand \\
Laboratoire Ami\'enois de Math\'ematiques Fondamentales et Appliqu\'ees, CNRS-UMR 7352, \\
Universit\'{e} de Picardie Jules Verne, \\
33 rue Saint Leu, 80000 Amiens,\\
France\\
\href{mailto:fabien.durand@u-picardie.fr}{\tt fabien.durand@u-picardie.fr} \\
\ \\
Julien Leroy \\
Institut de math\'ematique, \\
Universit\'e de Li\`ege \\
All\'ee de la d\'ecouverte 12 (B37) \\
4000 Li\`ege, \\
Belgium \\
\href{J.Leroy@ulg.ac.be}{\tt J.Leroy@ulg.ac.be}\\
\end{center}

\vskip .2 in

\begin{abstract}
Mossé proved that primitive morphisms are recognizable. In this paper we give a computable upper bound for the constant of recognizability of such a morphism.
This bound can be expressed only using the cardinality of the alphabet and the length of the longest image under the morphism of a letter.
\end{abstract}

\section{Introduction}

Infinite words, i.e., infinite sequences of symbols from a finite set, usually called alphabet, 
form a classical object of study. 
They have an important representation power: they provide a natural way to code elements of an
infinite set using finitely many symbols, e.g., the coding of an orbit in a discrete dynamical system or 
the characteristic sequence of a set of integers. 
A rich family of infinite words, with a simple algorithmic description, 
is made of the words obtained by iterating a morphism $\sigma:A^* \to A^*$ \cite{Choffrut&Karhumaki:1997}, where $A^*$ is the free monoid generated by the finite alphabet $A$.

If $\sigma$ is {\em prolongable} on some letter $a \in A$, that is, if $\sigma(a) = au$ for some non-empty word $u$ and $\lim_{n \to +\infty} |\sigma^n(a)| = +\infty$, then $\sigma^n(a)$ converges to an infinite word $x = \sigma^{\omega}(a) \in A^\mathbb{N}$ that is a fixed point of $\sigma$.
Two-sided fixed points are similarly defined as infinite words of the form $\sigma^{\omega}(a \cdot b) \in A^\Z$, where $\sigma(a) = ua$ and $\sigma(b) = bv$ with $u,v \in A^+$ and $\lim_{n \to +\infty} |\sigma^n(a)| = \lim_{n \to +\infty} |\sigma^n(b)| = +\infty$.
Such a fixed point is said to be {\em admissible} if $ab$ occurs in $\sigma^n(c)$ for some $n \in \N$ and some $c \in A$.
When the morphism is {\em primitive}, i.e., there exists $k \in \mathbb{N}$ such that $b$ occurs in $\sigma^k(c)$ for all $b,c\in A$, then $x$ is {\em uniformly recurrent}: any finite word that occurs in $x$ occurs infinitely many times in it and with bounded gaps \cite{Queffelec:2010}.
The converse almost holds true: if $x=\sigma^{\omega}(a)$ is uniformly recurrent, then there exist a primitive morphism $\varphi:B^* \to B^*$, a letter $b \in B$ and a morphism $\psi:B^* \to A^*$ such that $x = \psi(\varphi^{\omega}(b))$ \cite{Durand:1998}.
We let $\cL(x)$ denote the set of {\em factors} of $x$, i.e., $\cL(x) = \{u \in A^* \mid \exists p \in A^*, w \in A^\N: x = puw\}$ (with a similar definition of two-sided fixed points). 

{\em Recognizability} is a central notion when dealing with fixed point of morphisms.
It is linked to existence of long powers $u^k$ in $\cL(x)$ \cite{Mignosi&Seebold:1993}.
An infinite word $x \in A^\Z$ is said to be {\em $k$-power-free} if there is no non-empty word $u$ such that $u^k$ belongs to $\cL(x)$.
We refer, for example, to \cite{Durand&Host&Skau:1999,Holton&Zamboni:1999,Canterini&Siegel:2001,Pytheas}.
It roughly means that any long enough finite word that occurs in $\sigma^{\omega}(a)$ has a unique pre-image under $\sigma$, except for a prefix and a suffix of bounded length which is called the {\em constant of recognizability}.
A fundamental result concerning recognizability is due to Mossé who proved that aperiodic primitive morphisms (i.e., primitive morphisms with aperiodic fixed points) are recognizable~\cite{Mosse:1992,Mosse:1996}.
In this paper, we present a detailed proof of this result. 
This allows us to give a bound on the constant of recognizability.

\section{Recognizability}

Given a morphism $\sigma:A^* \to A^*$, we respectively define $|\sigma|$ and $\langle \sigma \rangle$ by
\[
	|\sigma| = \max_{a \in A} |\sigma(a)|,
	\quad \text{and,} \quad
	\langle\sigma\rangle = \min_{a \in A} |\sigma(a)|.
\]
Assuming that $\sigma$ has an admissible fixed point $x \in A^\Z$, for all $p \in \N$, we let $f_x^{(p)}$ denote the function
$$
f_x^{(p)}: \Z \to \Z, i \mapsto 
f_x^{(p)} (i) = 
\begin{cases}
	|\sigma^p (x_{[0,i[})| 	& 	\text{if } i > 0,	\\
	0 						&	\text{if } i=0,		\\
	|\sigma^p (x_{[i,0[})| 	& 	\text{if } i < 0.
\end{cases}
$$
We set $E (x , \sigma^p) = f_x^{(p)}(\mathbb{Z})$.
When it is clear from the context, we simply write $f^{(p)}$ instead of $f_x^{(p)}$.

Given two integers $i,j$ with $i \leq j$, we let $x_{[i,j]}$ and $x_{[i,j[}$ respectively denote the factors $x_i x_{i+1} \cdots x_j$ and $x_i x_{i+1} \cdots x_{j-1}$ (with $x_{[i,i[} = \varepsilon$, where $\epsilon$ is the empty word, i.e., the neutral element of $A^*$).

\begin{definition}
We say that $\sigma$ is {\em recognizable on $x$} if there exists some constant $L > 0$ such that for all $i,m \in \Z$,
\[
(x_{[m-L,m+L]} =  x_{[f^{(1)}(i)-L,f^{(1)}(i)+L]}) 
\Rightarrow 
(\exists j \in \Z)((m = f^{(1)}(j)) \wedge (x_i =x_j)).  
\]
The smallest $L$ satisfying this condition is called the {\em constant of recognizability} of~$\sigma$ for $x$.
When $\sigma$ is recognizable on all its admissible fixed points, we say that it is recognizable and its constant of recognizability is the greatest one.
\end{definition}

\begin{lemma}
If $\sigma:A^* \to A^*$ is recognizable on the admissible fixed point $x \in A^\Z$ and if $L$ is the constant of recognizability of $\sigma$ for $x$, then for all $k>0$, $x$ is also an admissible fixed point of $\sigma^k$ and $\sigma^k$ is recognizable on $x$ and its constant of recognizability for $x$ is at most $L \frac{|\sigma|^k-1}{|\sigma|-1}$.
\end{lemma}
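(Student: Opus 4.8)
The plan is to argue by induction on $k$, the case $k=1$ being trivial. The tool I will use repeatedly is the following reformulation of recognizability, immediate from the definition: if a morphism $\tau$ is recognizable on $x$ with constant $\ell$ and $n$ is a cutting point of $\tau$, say the one of index $i$, then for every $n'\in\Z$ with $x_{[n-\ell,n+\ell]}=x_{[n'-\ell,n'+\ell]}$, $n'$ is again a cutting point of $\tau$, say the one of index $j$, and $x_i=x_j$; indeed this is the definition read with $m:=n'$. I will call this the detectability of cutting points. The first assertion of the lemma I would dispatch at once: $\sigma^k(x)=x$, the morphism $\sigma^k$ is prolongable on $a$ and on $b$ with images of unbounded length (note $|\sigma|\ge 2$, as $x$ is admissible), and the word $ab$, which already occurs in some $\sigma^n(c)$ by admissibility, occurs in $\sigma^{km}(c')$ for suitable $m$ and $c'$; hence $x$ is an admissible fixed point of $\sigma^k$.

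For the inductive step I factor $\sigma^k=\sigma\circ\sigma^{k-1}$ and use that, $x$ being a fixed point of $\sigma$, one has $f^{(k)}=f^{(1)}\circ f^{(k-1)}$, so $E(x,\sigma^k)=f^{(1)}\bigl(E(x,\sigma^{k-1})\bigr)$: every $\sigma^k$-cutting point of $x$ has the form $f^{(1)}(q)$ with $q$ a $\sigma^{k-1}$-cutting point. Put $L_k:=|\sigma|\,L_{k-1}+L$ with $L_1:=L$, which solves to $L_k=L\frac{|\sigma|^k-1}{|\sigma|-1}$; assuming $\sigma^{k-1}$ is recognizable on $x$ with constant at most $L_{k-1}$, I must show the same for $\sigma^k$ with $L_{k-1}$ replaced by $L_k$. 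So suppose $x_{[m-L_k,m+L_k]}=x_{[f^{(k)}(i)-L_k,f^{(k)}(i)+L_k]}$ and write $f^{(k)}(i)=f^{(1)}(q)$ with $q=f^{(k-1)}(i)$. Since $L_k\ge L$, detectability for $\sigma$ gives $m=f^{(1)}(p)$ for some $p$. The crux is a pull-back step: $x_{[p-L_{k-1},p+L_{k-1}]}=x_{[q-L_{k-1},q+L_{k-1}]}$. Granting it, detectability for $\sigma^{k-1}$ (whose constant is at most $L_{k-1}$, so the matching windows of radius $L_{k-1}$ may be shortened to it), applied at the $\sigma^{k-1}$-cutting point $q=f^{(k-1)}(i)$, yields $p=f^{(k-1)}(j)$ with $x_i=x_j$; then $m=f^{(1)}(p)=f^{(k)}(j)$ and $x_i=x_j$, which is exactly what recognizability of $\sigma^k$ with constant $L_k$ requires.

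The pull-back step is the only place where I expect real work, the subtlety being the asymmetry between reading to the right and to the left of a cutting point: to the right one can slide block boundaries forward using pre-image letters already recovered, but not to the left. I would bypass this by applying detectability of $\sigma$ at \emph{every} position of a subwindow rather than only at $m$ and $f^{(1)}(q)$: for each $n$ with $|n-m|\le L_k-L=|\sigma|\,L_{k-1}$, the radius-$L$ window around $n$ sits inside the radius-$L_k$ window around $m$, hence equals the radius-$L$ window around the matched position $n+(f^{(1)}(q)-m)$ near $f^{(1)}(q)$; so $n$ is a $\sigma$-cutting point if and only if its translate is, and in that case the two carry the same pre-image letter. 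Since each $f^{(1)}(p+t)$ with $|t|\le L_{k-1}$ lies within $|\sigma|\,L_{k-1}$ of $m=f^{(1)}(p)$ (a block of $|t|$ letters has $\sigma$-image of length at most $|t|\,|\sigma|$), the translation by $f^{(1)}(q)-f^{(1)}(p)$ restricts to an order-preserving bijection between the $\sigma$-cutting points near $f^{(1)}(p)$ and those near $f^{(1)}(q)$ that sends $f^{(1)}(p)$ to $f^{(1)}(q)$; an order-preserving bijection between intervals of integers fixing one point is the identity index shift, so $f^{(1)}(p+t)+(f^{(1)}(q)-f^{(1)}(p))=f^{(1)}(q+t)$ and $x_{p+t}=x_{q+t}$ for all $|t|\le L_{k-1}$, which is the pull-back claim. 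This closes the induction, and it also shows why the bound comes out as the stated geometric sum: $L_k=|\sigma|\,L_{k-1}+L$ is precisely the radius needed so that detectability of $\sigma$ can be propagated across the full range $|n-m|\le|\sigma|\,L_{k-1}$.
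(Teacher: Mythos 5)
Your proof is correct and takes essentially the same route as the paper's: induction on $k$ with the recursion $L_k=|\sigma|L_{k-1}+L$ (giving the geometric sum), applying recognizability of $\sigma$ at every position of the large window to desubstitute one level and then invoking the induction hypothesis for $\sigma^{k-1}$. Your ``pull-back step'' simply makes explicit the cutting-point/letter correspondence that the paper compresses into the phrase ``uniquely determines''.
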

\begin{proof}
The result holds by induction on $k>0$.
The infinite word $x$ is obviously an admissible fixed point of $\sigma^k$.
With $L' = L \frac{|\sigma|^k-1}{|\sigma|-1}$, let us show that for all $i \in \Z$, the word 
\[
	x_{[f^{(k)}(i)-L',f^{(k)}(i)+L']}
\]
uniquely determines the letter $x_i$.

By recognizability, the word $x_{[f^{(k)}(i)-L',f^{(k)}(i)+L']}$ uniquely determines the word $x_{[m,M]}$, where $m$ is the smallest integer such that $f^{(k)}(i)-L' \leq f^{(1)}(m)-L$ and $M$ is the largest integer such that $f^{(1)}(M)+L \leq f^{(k)}(i)+L'$.
Therefore, the word $x_{[f^{(k)}(i)-L',f^{(k)}(i)+L']}$ uniquely determines the word 
\[
	x_{[f^{(k-1)}(i)-\frac{L'-L}{|\sigma|},f^{(k-1)}(i)+\frac{L'-L}{|\sigma|}]} = x_{[f^{(k-1)}(i)-L'',f^{(k-1)}(i)+L'']},
\]
where $L'' = L \frac{|\sigma|^{k-1}-1}{|\sigma|-1}$.
\end{proof}

\begin{theorem}
\label{theo:mosse}
Let $\sigma :A^* \to A^*$ be an aperiodic primitive morphism and let $x \in \Z$ be an admissible fixed point of $\sigma$.
\begin{enumerate}
\item
\cite{Mosse:1992}
There exists $M>0$ such that, for all $i,m \in \Z$,
$$
	x_{[f^{(1)}(i)-M,f^{(1)}(i)+M]} =  x_{[m-M,m+M]} \Longrightarrow m \in E (x, \sigma).  
$$
\item
\cite{Mosse:1996}
There exists $L>0$ such that, for all $i,j \in \Z$, 
$$
x_{[f^{(1)}(i)-L,f^{(1)}(i)+L]} =  x_{[f^{(1)}(j)-L,f^{(1)}(j)+L]} \Longrightarrow x_i = x_j.  
$$
\end{enumerate}
\end{theorem}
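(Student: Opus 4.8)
The plan is to prove the two items at once, deriving item~2 from item~1 by a short block-by-block descent, so that all the real difficulty is concentrated in item~1.

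\emph{Reductions.} First I would pass to a more convenient morphism. Since $|\sigma^{N}|$ and $\langle\sigma^{N}\rangle$ both tend to infinity, one may replace $\sigma$ by a large power $\tau=\sigma^{N}$: inside a block $\tau(a)=\sigma^{N}(a)$ the internal $\sigma$-cutting points lie at positions that depend only on $a$, so recognizability of $\tau$ on $x$ — and, item by item, the theorem for $\tau$ — yields the same statements for $\sigma$, the constant growing only by an additive $O(|\sigma|^{N})$. Next one reduces to $\tau$ injective on letters: if $\tau(a)=\tau(b)$ with $a\ne b$, quotient $A$ by $a\sim b\iff\tau(a)=\tau(b)$; the induced morphism $\bar\tau$ is again primitive, again injective on letters, and its fixed point $\bar x=\pi(x)$ is again aperiodic (a $\tau$-preimage of a periodic word is periodic, so $\bar x$ periodic would force $x$ periodic), and recognizability of $\bar\tau$ on $\bar x$ lifts back to $\tau$ on $x$ because $\tau(c)$ depends only on $\pi(c)$; so one runs an induction on $\#A$ whose base case is $\tau$ injective on letters with $\langle\tau\rangle$ as large as needed. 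I would also record, for use below, the quantitative consequences of primitivity: letter and factor frequencies exist and are positive and uniform; $x$ is linearly recurrent with recurrence constant bounded in terms of $\#A$ and $|\sigma|$; hence $x$ is $p$-power-free for an explicit $p=p(\#A,|\sigma|)$; and, by compactness, every $y$ in the subshift $X=\overline{\{S^{n}x:n\in\Z\}}$ admits at least one $\tau$-decomposition $y=S^{r}\tau(z)$ with $z\in X$ and $0\le r<|\tau(z_{0})|$.

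\emph{Item~1 (the core).} Suppose item~1 fails for $\tau$. Negating the statement and letting the window radius tend to infinity, I would extract along a subsequence — passing to limits in $X$ and using that the offsets involved are bounded — a single $y\in X$ carrying two distinct $\tau$-decompositions; writing $\cT$ and $\cT'$ for the induced tilings of $\Z$ by the blocks $\{\tau(a):a\in A\}$, one arranges $0\in\cT\setminus\cT'$, with desubstituted sequences $z,z'\in X$. The goal is to contradict aperiodicity, and this is Mossé's key lemma. The mechanism is a counting argument: the local configuration witnessing $0\in\cT\setminus\cT'$ recurs along $y$ with gaps bounded by the linear recurrence constant, so the misalignment of the two tilings reproduces itself with positive frequency; whenever the two tilings momentarily re-synchronise with equal abutting blocks, injectivity on letters forces $z$ and $z'$ to coincide there, i.e.\ forces one of the (finitely many) asymptotic pairs of the substitutive subshift $X$, which severely limits how re-synchronisations can occur; pushing the whole picture through $\tau$ and comparing the two tilings on windows of growing length then produces in $x$ repetitions $u^{k}$ of unbounded exponent, contradicting the finite power index fixed above. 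I expect this to be the main obstacle: choosing the right quantities to count and controlling the re-synchronisations is where all the combinatorial work — and all the effectivity needed for the announced bound — lives.

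\emph{Item~2 from item~1.} Now assume item~1 (hence its analogue for $\sigma$, via the reductions) and suppose item~2 fails: for arbitrarily large $L$ there are $i,j$ with $x_{[f^{(1)}(i)-L,f^{(1)}(i)+L]}=x_{[f^{(1)}(j)-L,f^{(1)}(j)+L]}$ but $x_{i}\ne x_{j}$; say $|\sigma(x_{i})|\le|\sigma(x_{j})|$. If the inequality is strict, then the genuine cutting point $f^{(1)}(i+1)=f^{(1)}(i)+|\sigma(x_{i})|$ corresponds, inside the $j$-window, to the position $f^{(1)}(j)+|\sigma(x_{i})|$, which lies strictly between $f^{(1)}(j)$ and $f^{(1)}(j+1)$ and so is not in $E(x,\sigma)$; since the two windows agree, the window centred at $f^{(1)}(i+1)$ equals one centred at a non-cutting point, contradicting item~1 once $L$ exceeds its constant plus $|\sigma|$. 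Hence $|\sigma(x_{i})|=|\sigma(x_{j})|$, and reading the first block off the two matching windows gives $\sigma(x_{i})=\sigma(x_{j})$, so $x_{i}=x_{j}$ by injectivity on letters — a contradiction. (Equivalently, iterating the comparison block by block in both directions and passing to a limit in $X$ gives $S^{f^{(1)}(i)}x=S^{f^{(1)}(j)}x$, forcing periodicity of $x$.) Finally, carrying all the constants through the reductions and through the counting argument of item~1 makes the resulting $M$ and $L$ explicit — which is the point of the paper.
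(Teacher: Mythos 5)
Your proposal has two genuine gaps, one in each item. For item 1, the entire core of the argument is deferred rather than proved: you reduce to a limit point $y$ of the subshift carrying two distinct $\tau$-tilings and then assert that ``a counting argument'' on recurring misalignments and re-synchronisations ``produces repetitions $u^k$ of unbounded exponent'', explicitly acknowledging that choosing what to count is the main obstacle. That is precisely the content that has to be supplied, and it is where the paper does its work (Proposition~\ref{prop:rec1}): one fixes the scales $l_p = R|\sigma^p|$ for $p=1,\dots,Q$, associates to each alleged counterexample a pair of words $(u_p,v_p)$ whose lengths are controlled by Lemma~\ref{lemme:longueurs_u_v_REC}, applies the pigeonhole principle (the number of such pairs is $<Q$) to find two scales $p<q$ with the same pair, and then uses $k$-power-freeness to force the alignments $A_q=\sigma^{q-p}(A_p)$, $B_q=\sigma^{q-p}(B_p)$, which contradicts $j_q\notin E(x,\sigma)$. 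Your sketch belongs to the same family of ideas (misalignment versus power-freeness), but without an explicit choice of scales, of the counted objects, and of the bound that yields the forbidden power, there is no proof of item 1, let alone an effective constant.

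For item 2, the reduction to a morphism injective on letters does not do what you need. Quotienting $A$ by $a\sim b \iff \tau(a)=\tau(b)$ and proving the theorem for $\bar\tau$ on $\bar x=\pi(x)$ lifts correctly for item 1, because the cutting points of $\bar\tau$ on $\bar x$ coincide with those of $\tau$ on $x$; but for item 2 it only yields $\pi(x_i)=\pi(x_j)$, i.e.\ $\tau(x_i)=\tau(x_j)$, not $x_i=x_j$. Your justification (``$\tau(c)$ depends only on $\pi(c)$'') is exactly the statement that the conclusion you recover concerns classes of letters, not letters; and the letter $x_i$ is determined by data around position $i$, not around $f^{(1)}(i)$, so there is no immediate way to upgrade the class equality. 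This is precisely the point where Mossé, and the paper, invoke the Ehrenfeucht--Rozenberg theorem (Theorem~\ref{theo:exponent card A}): there is $d\le \#A-1$ with $\sigma^{d}(u)=\sigma^{d}(v)\Rightarrow\sigma^{d-1}(u)=\sigma^{d-1}(v)$ on $\cL(x)$, and Proposition~\ref{prop:rec2} then descends from level $d$ to level $d-1$, matching offsets inside the equal words $\sigma^{d-1}(x_k)=\sigma^{d-1}(x_l)$ to conclude $x_i=x_j$ with constant $M+|\sigma^{d}|$. Your argument is fine in the special case where $\sigma$ (or $\tau$) is injective on letters, but the induction on $\#A$ you invoke to treat the general case breaks at the lifting step for item 2, so the non-injective case --- the only delicate one --- is not handled.
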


By a careful reading of the proofs of Moss\'e's results, we can improve it as follows. 
The proof is given in Section~\ref{section:detailed result}.
For an infinite word $x \in A^\Z$, we let $p_x:\N \to \N$ denote the {\em complexity function} of $x$ defined by $p_x(n) = \#\cL_n(x)$ where $\cL_n(x) = (\cL(x) \cap A^n)$.

\begin{theorem}
\label{theo:maindetail}
Let $\sigma :A^* \to A^*$ be a morphism with an admissible fixed point $x \in A^\Z$.
If $x$ is $k$-power-free and if there is some constant $N$ such that for all $n \in \mathbb{N}$, $|\sigma^n| \leq N \langle \sigma^n \rangle$, then $\sigma$ is recognizable on $x$ and its constant of recognizability for $x$ is at most $R |\sigma^{dQ}|+|\sigma^{d}|$, where
\begin{itemize}
\item
$R = \lceil  N^2(k+1)+2N\rceil$;
\item
$Q = 1 + p_x(R)\left(\sum_{\frac{R}{N} \leq i \leq RN+2} p_x(i) \right)$;
\item
$d \in \{1,2,\dots,\# A\}$ is such that for any words $u,v \in \cL(x)$, we have 
\[
	\sigma^{d-1}(u) \neq \sigma^{d-1}(v) \Rightarrow \forall n, \sigma^n(u) \neq \sigma^n(v). 
\] 
\end{itemize}
\end{theorem}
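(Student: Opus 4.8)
The plan is to revisit the proofs of the two assertions of Theorem~\ref{theo:mosse} and run Mossé's arguments while keeping every constant explicit; the quantity $R|\sigma^{dQ}|+|\sigma^d|$ will come out as an admissible value of the constant $L$ in the definition of recognizability. I would first record the bookkeeping tools. Since $x=\sigma(x)$, the cut-point sets satisfy $E(x,\sigma^{n+m})\subseteq E(x,\sigma^n)$, so the $\sigma^n$-tilings of $x$ refine one another as $n$ decreases; the hypothesis $|\sigma^n|\le N\langle\sigma^n\rangle$ says that, at every level $n$, all tiles of the $\sigma^n$-tiling have comparable length (ratio at most $N$), which is exactly what will control how the size of a window behaves under de-substitution. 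I would also fix the integer $d$: the relations $u\sim_n v\iff\sigma^n(u)=\sigma^n(v)$ on $\cL(x)$ form an increasing chain $\sim_0\subseteq\sim_1\subseteq\cdots$ that is eventually constant (once $\sim_m=\sim_{m+1}$ it stays so), and, as recorded in the statement, the stabilization already occurs at some level $d-1\le\#A-1$; thus $\sigma^{d-1}$ separates exactly the pairs of factors that no power of $\sigma$ ever merges, and $\sigma^d$ is ``as injective as it will ever get''.

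The core of the argument is a descent combined with a pigeonhole count. Suppose $L:=R|\sigma^{dQ}|+|\sigma^d|$ did not witness recognizability of $\sigma$ on $x$: then there are positions carrying the same window of radius $L$ but incompatible for the $\sigma$-tiling, either because one lies in $E(x,\sigma)$ and the other does not (the situation of \cite{Mosse:1992}), or because both are cut points but the underlying letters differ (the situation of \cite{Mosse:1996}). Since the window has radius well above $|\sigma^{dQ}|$, one can de-substitute $Q$ times along the tilings of levels $d,2d,\dots,Qd$: the comparability coming from $N$ keeps the radius of the relevant window, measured in tiles of the current level, trapped between $R/N$ and $RN+2$, and the defining property of $d$ guarantees that the mismatch (a missing cut, or a genuine letter discrepancy) is not destroyed when one applies $\sigma^d$, so it survives each of the $Q$ steps. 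To each of the $Q+1$ levels I would attach a finite ``local configuration'': the factor of $x$ of length $R$ at the de-substituted position, together with the factor --- of length in $[R/N,RN+2]$ --- recording how that tile sits inside the window. There are at most $p_x(R)\cdot\bigl(\sum_{R/N\le i\le RN+2}p_x(i)\bigr)=Q-1$ such configurations, so two of the $Q+1$ levels share one; overlaying the two occurrences, the displacement between them is a period of a factor of $x$ whose length is at least $(k+1)$ times that period, so $u^{k+1}\in\cL(x)$ for some non-empty $u$. As $x$ is $k$-power-free and $u^{k+1}$ contains $u^k$, this is the desired contradiction, and $R=\lceil N^2(k+1)+2N\rceil$ is precisely what makes the inequality ``window length $\ge(k+1)\cdot$ period'' close up: one factor $N$ pays for the spread of tile lengths inside the window, a second for the ratio introduced by de-substitution, $(k+1)$ for the exponent one wants to produce, and the additive $2N$ absorbs the two incomplete boundary tiles.

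It then remains to assemble the pieces in the shape required by the definition: the cut-detection part forces the central position $m$ of a radius-$L$ window to lie in $E(x,\sigma)$, hence $m=f^{(1)}(j)$ for a unique $j$; the letter-identification part, run through the same descent, yields $x_i=x_j$, and this is where the extra additive $|\sigma^d|$ is spent, since the window still extends $|\sigma^d|$ beyond the cut $m$, hence contains a full $\sigma^d$-image of the factor starting at $j$, on which the $d$-injectivity can be invoked. Altogether $R|\sigma^{dQ}|+|\sigma^d|$ bounds the constant of recognizability of $\sigma$ for $x$. I expect the main obstacle to be precisely the bookkeeping of the descent: one must verify that de-substitution neither enlarges nor shrinks the ambiguous window beyond the interval $[R/N,RN+2]$ --- so that the pigeonhole count is literally $Q-1$ --- and that the incompatibility genuinely persists through all $Q$ steps, which is where the hypotheses $|\sigma^n|\le N\langle\sigma^n\rangle$ and the defining property of $d$ are used together; making all these constants fit is the delicate point, rather than any single new idea beyond Mossé's.
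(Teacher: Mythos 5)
Your overall architecture is the same as the paper's: calibrate $R=\lceil N^2(k+1)+2N\rceil$ so that a misaligned decomposition produces a $(k+1)$-power, run a pigeonhole over about $Q$ scales on pairs consisting of a length-$R$ factor and a factor of length between $R/N$ and $RN+2$ (this is Proposition~\ref{prop:rec1}, applied to $\sigma^d$, which yields the term $R|\sigma^{dQ}|$), and then spend the additive $|\sigma^d|$ together with the defining property of $d$ to pass from cut detection to letter identification (Proposition~\ref{prop:rec2}). Working with a single bad window of radius $L$ and its sub-windows at the scales $R|\sigma^{pd}|$, $p\le Q$, instead of invoking a counterexample at every scale, is a harmless repackaging.

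However, there is a genuine gap at the core of the descent. You assert that once two levels $p<q$ share the same local configuration, ``the displacement between them is a period \dots so $u^{k+1}\in\cL(x)$'', and you take this as the final contradiction. In the actual argument the coincidence of configurations is not contradictory by itself. Comparing the two decompositions of the same word, $\sigma^{q-p}(A_p)\,\sigma^q(\tilde v)\,\sigma^{q-p}(B_p)=A_q\,\sigma^q(\tilde v)\,B_q$, power-freeness (through the choice of $R$) only excludes the case $|A_q|\neq|\sigma^{q-p}(A_p)|$, because only then is the displacement a (short) period of the (long) word $\sigma^q(\tilde v)$. Nothing forces that displacement to be nonzero: the hypothesis $m\notin E(x,\sigma)$ is a level-$1$ misalignment and does not imply that the induced decompositions at levels $p$ and $q$ are misaligned, so your claim that ``the mismatch survives each of the $Q$ steps'' does not produce a power. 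In the aligned case $A_q=\sigma^{q-p}(A_p)$ one still has to do the second half of the work, namely the part of the proof of Proposition~\ref{prop:rec1} establishing \eqref{eq:claim end}: locate the aligned occurrence inside $\sigma^p(v)$, push it forward by $\sigma^{q-p}$, and transfer the $\sigma$-cut structure of the window around the genuine cut point to the window around $m$, concluding $m\in E(x,\sigma)$ and contradicting the choice of the counterexample (and similarly for the letter discrepancy in the second step). This cut-transfer argument, which is where the conclusion actually comes from, is absent from your sketch; power-freeness is only the auxiliary tool that forces the alignment. A secondary point: $d$ plays no role in the cut-detection descent (Proposition~\ref{prop:rec1} is simply applied to the morphism $\sigma^d$), so appealing to the defining property of $d$ to make the mismatch ``survive de-substitution'' is neither needed there nor a substitute for the missing aligned case; $d$ is used only in the final step, exactly as you describe it when spending the extra $|\sigma^d|$.
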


Then, we give some computable bounds for $N$, $R$, $k$, $Q$ and $d$ in the case of primitive morphisms.
These bounds are not sharp but can be expressed only using the cardinality of the alphabet and the maximal length $|\sigma|$.
The proof is given in Section~\ref{section: primitive case}.

\begin{theorem}
\label{theo:main}
Any aperiodic primitive morphism $\sigma :A^* \to A^*$ that admits a fixed point $x \in A^\Z$ is recognizable on $x$ and the constant of recognizability for $x$ is at most
\[
	2 |\sigma|^{6(\#A)^2 + 6(\#A) |\sigma|^{28(\#A)^2}} + |\sigma|^{(\#A)}. 
\]
\end{theorem}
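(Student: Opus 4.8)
The plan is to derive Theorem~\ref{theo:main} from Theorem~\ref{theo:maindetail} by bounding each of the quantities $N$, $k$, $R$, $Q$ and $d$ that appear there in terms of only $\#A$ and $|\sigma|$, for an aperiodic primitive $\sigma$. First I would recall that primitivity lets us work with a fixed power $\sigma^e$ whose incidence matrix is strictly positive; a standard pigeonhole argument on the incidence matrix (whose entries are bounded in terms of $\#A$ and $|\sigma|$) shows one may take $e \le \#A$ (or a similarly small explicit value). From strict positivity one gets a uniform comparison $|\sigma^n| \le N \langle \sigma^n\rangle$ for all $n$, with $N$ bounded by something like $|\sigma|^{c(\#A)}$: indeed $\langle \sigma^n\rangle$ grows like the Perron eigenvalue to the $n$, and the ratio of coordinates of iterates of a positive matrix is controlled by the ratio of the largest to smallest entries of a fixed positive power. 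This gives the first ingredient.

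Next I would bound $k$, the power-freeness index. For a primitive aperiodic morphism the fixed point is uniformly recurrent and aperiodic, so it is $k$-power-free for some $k$; one needs an \emph{effective} such $k$. The cleanest route is: if $u^k \in \cL(x)$ with $k$ large then, by recognizability applied at a suitable power, $u$ itself desubstitutes to a shorter power, and iterating leads to a fixed letter being periodic, contradicting aperiodicity — but to make this effective without circularity I would instead use a direct combinatorial estimate, e.g. bounding the length and exponent of repetitions in a primitive substitutive word by a function of $\#A$ and $|\sigma|$ via the bounded-gap (linear recurrence) constant, which for primitive morphisms is itself explicitly bounded (linear recurrence constant $\le C|\sigma|^{\#A}$ or similar). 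This yields $k \le |\sigma|^{c'(\#A)}$.

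Then $R = \lceil N^2(k+1) + 2N\rceil$ is immediately bounded by a polynomial in $N$ and $k$, hence by $|\sigma|^{c''(\#A)}$. For $Q$ I would bound the complexity function: for a primitive morphism $p_x(n) \le (\#A) \cdot |\sigma| \cdot n$ for all $n\ge 1$ (subword complexity is at most linear with an explicit slope), so each factor $p_x(R)$, $p_x(i)$ with $i \le RN+2$ is at most a polynomial in $R$ and $N$; multiplying and summing gives $Q \le R \cdot (\#A|\sigma|)^2 \cdot (RN)^{O(1)}$, again of the form $|\sigma|^{c'''(\#A)}$. Finally $d \le \#A$ is given outright in Theorem~\ref{theo:maindetail}. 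Substituting all these bounds into $R|\sigma^{dQ}| + |\sigma^d|$, and using $|\sigma^m| \le |\sigma|^m$, collapses everything into a single tower of the announced shape $2|\sigma|^{6(\#A)^2 + 6(\#A)|\sigma|^{28(\#A)^2}} + |\sigma|^{\#A}$, where the exponents $6$, $6$, $28$ absorb all the implicit constants after a (deliberately loose) arithmetic simplification — the final step is just bookkeeping to check the chosen exponents dominate the accumulated polynomial factors.

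The main obstacle I expect is making the bound on $k$ (equivalently, on the maximal exponent of a repetition, or on the linear recurrence constant) fully explicit and clean: aperiodicity is a qualitative hypothesis, and turning "the fixed point is not eventually periodic" into a quantitative "no $k$-th power for this explicit $k$" requires either a careful effective analysis of Mossé-type recognizability at a bounded power, or an effective version of the Morse--Hedlund / linear-recurrence machinery for primitive morphisms; keeping that argument non-circular (not secretly assuming the very recognizability constant we are bounding) and with constants small enough to fit under the exponent $28(\#A)^2$ is the delicate point. The remaining bounds on $N$, $p_x$, $R$, $Q$ are routine linear-algebra and combinatorics-on-words estimates, and the closing computation is purely mechanical.
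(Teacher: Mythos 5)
Your overall strategy --- reduce to Theorem~\ref{theo:maindetail} and bound $N$, $k$, $R$, $Q$, $d$ in terms of $\#A$ and $|\sigma|$ only --- is exactly the paper's. But the step you yourself single out as the main obstacle, an effective bound on $k$ (equivalently, on the linear recurrence constant), is precisely what your proposal leaves unestablished, and it is not mere bookkeeping. The paper fills it with two specific ingredients that you gesture at but do not supply: (i) Durand's return-word inequality $K_\sigma \leq R N |\sigma|$, where $R$ is the maximal length of a return word to a length-$2$ factor, combined with the elementary observation that every length-$2$ word of $\cL(\sigma)$ already occurs in $\sigma^{2(\#A)^2}(b)$ for every letter $b$ (a new length-$2$ factor must appear at each iteration or never, and there are at most $(\#A)^2$ of them), which gives $R \leq 2|\sigma|^{2(\#A)^2}$ and hence $K_\sigma < |\sigma|^{4(\#A)^2}$; and (ii) the Durand--Host--Skau theorem that an aperiodic linearly recurrent word with constant $K$ is $(K+1)$-power-free and satisfies $p_x(n) \leq Kn$. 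These two facts also dispose of your circularity worry, since neither uses recognizability. Without them, your asserted bounds ``$K \leq C|\sigma|^{\#A}$ or similar'' and ``$p_x(n) \leq \#A\,|\sigma|\, n$'' are unsupported, and both are sharper than what this route actually yields (the paper only gets $p_x(n) \leq K_\sigma n$ with $K_\sigma < |\sigma|^{4(\#A)^2}$), so they cannot simply be asserted.

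A second concrete slip: you propose a power $e \leq \#A$ making the incidence matrix strictly positive. Wielandt's bound is $e \leq (\#A)^2 - 2\#A + 2$, and the quadratic order is attained (e.g., by a primitive matrix consisting of a cycle with a single extra entry), so $e \leq \#A$ is false in general. This is not cosmetic for the statement at hand: the exponents $6(\#A)^2$ and $28(\#A)^2$ in the claimed bound come exactly from this quadratic exponent, via $N \leq |\sigma|^{(\#A)^2}$ (from $|\sigma^n| \leq \langle \sigma^{n+(\#A)^2}\rangle \leq |\sigma^{(\#A)^2}|\langle\sigma^n\rangle$) and $K_\sigma < |\sigma|^{4(\#A)^2}$, after which the closing computation in the paper is indeed mechanical. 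So your skeleton is the right one, but the quantitative core --- explicit bounds on $N$, $K_\sigma$, $k$ and $p_x$ --- is missing or misstated, and the final constants cannot be checked from the inputs you give.
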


The bound given in the previous theorem is far from being sharp. 
When the morphism $\sigma$ is injective on $\cL(x)$ (which is decidable, see~\cite{Ehrenfeucht&Rozenberg:1978}), we can take $d=1$ in Theorem~\ref{theo:maindetail} and the computation in the proof of Theorem~\ref{theo:main} gives the bound 
\[
	2 |\sigma|^{6(\#A)^2 + 6 |\sigma|^{28(\#A)^2}} + |\sigma|. 
\]

The notion of recognizability is also known as {\em circularity} in the terminology of {\em D0L-systems} \cite{Kari&Rozenberg&Salomaa:1997}.
%Using other techniques, Klouda and Medkov\'a \cite{Klouda&Medkova:2016} obtained better bounds in the case of recognizable injective $k$-uniform morphisms on a binary alphabet.
Assume that $\sigma:A^* \to A^*$ is non-erasing and that $a \in A$ is a letter such that the language $\mathrm{Fac}(\sigma,a)$ defined as the set of factors occuring in $\sigma^n(a)$ for some $n$ is infinite.
Given a word $u = u_1 \cdots u_{|u|} \in \cL(a)$, we say that a triplet $(p,v,s)$ is an {\em interpretation} of $u$ if $\sigma(v) = pus$.
Two interpretations $(p,v,s),(p',v',s')$ are said to be {\em synchronized at position $k$} if there exist $i,j$ such that $1 \leq i \leq |v|$, $1 \leq j \leq |v'|$ and
\[
	\sigma(v_1 \cdots v_i) = p u_1 \cdots u_k 
	\quad \text{and} \quad
	\sigma(v_1' \cdots v_j') = p' u_1 \cdots u_k. 
\]
The word $u$ has a {\em synchronizing point (at position $k$)} if all its interpretations are synchronized (at position $k$).
The pair $(\sigma,a)$ is said to be {\em circular} if $\sigma$ is injective on $\mathrm{Fac}(\sigma,a)$ and if there is a constant $C$, called the {\em synchronizing delay} of $\sigma$, such that any word of length at least $C$ has a synchronizing point.
Thus, despite some considerations about whether we deal with fixed points or languages, recognizability and circularity are roughly the same notion and the synchronizing delay $C$ is associated with the constant of recognizablity $L$ through the equation $C = 2L+1$.
Using the termininology of D0L-systems, Klouda and Medkov\'a obtained the following result which greatly improves our bounds, but for restricted cases.
%Thus if $\sigma$ is a non-erasing morphism which is recognizable on an admissible fixed point $x \in A^\Z$ with constant of recognizability $L$ and which is injective on $\cL(x)$, then $\sigma$ is circular on $x$ with synchronizing delay $2L+1$.
%Similarly, if $\sigma$ is circular with synchronizing delay $C$, then it admits an admissible 
%
\begin{theorem}[\cite{Klouda&Medkova:2016}]
If $\# A = 2$ and if $(\sigma,a)$ is circular with $\sigma:A^* \to A^*$ a $k$-uniform morphism for some $k \geq 2$, then the synchronizing delay $C$ of $(\sigma,a)$ is bounded as follows:
\begin{enumerate}
\item
$C \leq 8$ if $k = 2$,
\item
$C \leq k^2+3k-4$ if $k$ is an odd prime number,
\item
$C \leq k^2 \left( \frac{k}{d}-1\right) +5k-4$ otherwise,
\end{enumerate}
where $d$ is the least divisor of $k$ greater than 1.
\end{theorem}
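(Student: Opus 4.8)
The plan is to bound $C$ by bounding the length of the longest word of $\mathrm{Fac}(\sigma,a)$ that has no synchronizing point, since if this quantity equals $\ell_{\max}$ then circularity gives $C = \ell_{\max}+1$. The decisive simplification is that $\sigma$ is $k$-uniform: in an interpretation $(p,v,s)$ of a word $u$, every prefix image $\sigma(v_1\cdots v_i)$ has length a multiple of $k$, so this interpretation places its $\sigma$-boundaries inside $u$ exactly at the positions $\ell$ with $\ell\equiv -|p|\pmod k$. Hence two interpretations can be synchronized at a common position only if $|p|\equiv|p'|\pmod k$, and, conversely, if all interpretations of $u$ share one residue modulo $k$ they cut $u$ at the same interior positions, which for $|u|>k$ yields a synchronizing point. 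First I would isolate this as a lemma: \emph{$u$ has a synchronizing point if and only if all its interpretations agree modulo $k$ on the length of their left part}. The task then becomes bounding the length of a word admitting two interpretations whose left parts differ in length by $s$ with $1\le s\le k-1$.

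Such a word forces an overlap. If $\sigma(v)=p\,u\,s_1$ and $\sigma(v')=p'\,u\,s_2$ are two interpretations with $|p'|-|p|\equiv s\pmod k$, then overlaying the two induced block structures equips the single word $u$ with two decompositions into the tiles $\sigma(a),\sigma(b)$ (both of length $k$, with possibly truncated boundary tiles) whose cut positions are shifted by $s$. Since $\#A=2$ this is rigid: comparing, on each overlap of a tile of one decomposition with a tile of the other, a suffix of length $k-s$ of one tile against a prefix of length $k-s$ of another, and a prefix of length $s$ against a suffix of length $s$, produces a chain of equalities among prefixes and suffixes of $\sigma(a)$ and $\sigma(b)$. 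Propagated along $u$, these equalities should force a period of $u$ equal to $g=\gcd(s,k)$ once $|u|$ passes the Fine--Wilf threshold for the two periods in play. I would then combine this periodicity with a single desubstitution step: the preimages $v,v'$ are distinct (as $\sigma$ is injective on $\mathrm{Fac}(\sigma,a)$ and the phases differ), and the overlap of $\sigma(v)$ with $\sigma(v')$ reproduces, on the shorter words $v,v'$, a configuration of the same type. This yields a recursive bound $|u|\le k|v|+O(k)$ whose depth and width are controlled by $g$.

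The divisor $d$ now enters through the two resulting regimes. When $g=\gcd(s,k)=1$ the forced period is $1$, so above the Fine--Wilf threshold $u$ would be a single repeated letter, contradicting aperiodicity; the recursion therefore bottoms out after essentially two levels, capping $|u|$ at order $k^2$, and the precise boundary bookkeeping should give $C\le k^2+3k-4$. This is exactly the situation for $k$ an odd prime, since then every shift $s\in\{1,\dots,k-1\}$ is coprime to $k$. When instead $g>1$ the period $g$ is compatible with aperiodicity, so a $g$-periodic configuration can persist across about $k/g-1$ repetitions before injectivity of $\sigma$ forces the two decompositions to coincide; this inflates the extremal length to order $k^2(k/g-1)$, maximal over $g>1$ at the least divisor $g=d$, and the bookkeeping should give $C\le k^2(k/d-1)+5k-4$. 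The case $k=2$ is too small for these asymptotics to be efficient, so I would settle it by a direct finite inspection of the two length-$2$ tiles, obtaining $C\le 8$.

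I expect the main obstacle to be the quantitative form of the second step: converting ``two decompositions shifted by $s$'' into the exact period of $u$ and then extracting the sharp constants. The qualitative contradiction, via Fine--Wilf together with injectivity and aperiodicity, is robust, but pinning down the extremal words realizing $k^2+3k-4$ and $k^2(k/d-1)+5k-4$ --- in particular how many $g$-periodic repetitions can coexist for a given $\gcd(s,k)$, and the exact count of truncated boundary tiles lost in the reduction --- is the delicate bookkeeping on which the sharpness of the theorem rests.
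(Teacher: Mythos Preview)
The paper does not contain a proof of this theorem. It is stated only as a citation of an external result from Klouda and Medkov\'a, with no argument given; the authors include it merely to contrast their general (but coarse) bound with the much sharper bounds available in the special binary uniform case. There is therefore nothing in the paper to compare your proposal against.

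That said, your outline is broadly the right shape for how the Klouda--Medkov\'a argument goes: the $k$-uniform hypothesis reduces synchronization to agreement of phases modulo $k$, two out-of-phase interpretations force a short period via Fine--Wilf, and the case split on whether every shift is coprime to $k$ is exactly what separates the prime and composite bounds. Your sketch is, however, only a plan: the exact constants $k^2+3k-4$ and $k^2(k/d-1)+5k-4$ come from a careful case analysis (and explicit extremal examples) in the original paper, not from a soft Fine--Wilf argument, and you correctly flag that extracting those sharp constants is where the real work lies. If you want to actually prove the statement you should consult \cite{Klouda&Medkova:2016} directly rather than the present paper.
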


%%%%%%%%%%%%%%%%%%%%%%%%%%%%%%%%%%%%%%%%%%%%%%%%%%%%%%%
\section{Proof of Theorem~\ref{theo:maindetail}} \label{section:detailed result}
%%%%%%%%%%%%%%%%%%%%%%%%%%%%%%%%%%%%%%%%%%%%%%%%%%%%%%%

Like in Mossé's original proof, the proof of Theorem~\ref{theo:maindetail} goes in two steps. 

As a first step, we express the constant $M$ of Theorem~\ref{theo:mosse} in terms of the constants $N$, $R$, $k$ and $Q$ of Theorem~\ref{theo:maindetail}.
This is done in Proposition \ref{prop:rec1} with a proof following the lines of the proof of~\cite[Proposition 4.35]{Kurka:2003}.
The difference is that we take care of all the needed bounds to express the constant of recognizability.

As a second step, we show that the constant $L$ of Theorem~\ref{theo:mosse} can be taken equal to $M' + |\sigma^{d}|$, where $d$ is as defined in Theorem~\ref{theo:maindetail} and $M'$ is such that for all $i,m \in \Z$,
\[
	x_{[f^{(d)}(i)-M',f^{(d)}(i)+M']} =  x_{[m-M,m+M]} \Longrightarrow m \in E (x, \sigma^d).  
\]
We first start with the following lemma.

\begin{lemma}
\label{lemme:longueurs_u_v_REC}
Let $\sigma:A^* \to A^*$ be a non-erasing morphism, $u \in A^*$ be a word and $n$ be a positive integer.
If $v=v_0 \cdots v_{t+1}  \in A^*$ is a word of length $t+2$ such that $\sigma^n(v[1,t])$ is a factor of $\sigma^n(u)$, and $\sigma^n(u)$ is a factor of $\sigma^n(v)$, then
\[
	  \frac{\langle \sigma^n\rangle}{|\sigma^n|} |u|-2 \leq t \leq \frac{|\sigma^n|}{\langle \sigma^n\rangle} |u|.
\]
\end{lemma}

\begin{proof}
Indeed, since $\sigma^n(v{[1,t]})$ is a factor of $\sigma^n(u)$ we have
$t \langle \sigma^n \rangle \leq |\sigma^n(v{[1,t]})| \leq |\sigma^n(u)| \leq |u| |\sigma^n|$.
Hence $t \leq |u| |\sigma^n|/\langle \sigma^n\rangle$.
Similarly, since $\sigma^n(u)$ is a factor of $\sigma^n(v)$, we have $|u| \leq (t+2) |\sigma^n|/\langle \sigma^n\rangle$. 
We thus have
\[
	|u|\frac{\langle \sigma^n\rangle}{|\sigma^n|}-2 \leq t \leq |u| \frac{|\sigma^n|}{\langle \sigma^n\rangle} .
\]
\end{proof}

\begin{proposition}
\label{prop:rec1}
Let $\sigma :A^* \to A^*$ be a morphism with an admissible fixed point $x \in A^\Z$.
Assuming that $x$ is $k$-power-free and that there is some constant $N$ such that for all $n \in \mathbb{N}$, $|\sigma^n| \leq N \langle \sigma^n \rangle$,
we consider the constants
\begin{itemize}
\item
$R = \lceil N^2(k+1)+2N\rceil$; 
\item
$Q = 1 + p_x(R)\left(\sum_{\frac{R}{N} \leq i \leq RN+2} p_x(i) \right)$.
\end{itemize}
The constant $M = R|\sigma^Q|$ is such that for all $i,m \in \Z$,
\begin{align}
\label{align:recM}
x_{[f^{(1)}(i)-M,f^{(1)}(i)+M]} =  x_{[m-M,m+M]} \Longrightarrow m \in E (x, \sigma ).  
\end{align}

\end{proposition}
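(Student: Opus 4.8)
The plan is to follow the strategy of \cite[Proposition 4.35]{Kurka:2003}, tracking every length bound so that the final constant is explicit. The heart of the argument is a pigeonhole/avoidance scheme: if $m \notin E(x,\sigma)$ but the central windows of radius $M$ around $f^{(1)}(i)$ and around $m$ coincide, then we produce a long power in $\cL(x)$, contradicting $k$-power-freeness. More precisely, I would argue by contraposition. Suppose that $x_{[f^{(1)}(i)-M,f^{(1)}(i)+M]} = x_{[m-M,m+M]}$ but $m \notin E(x,\sigma)$, i.e., $m$ is not a cutting point of the $\sigma$-decomposition of $x$. The point $m$ lies strictly inside the $\sigma$-image $\sigma(x_j)$ of some letter $x_j$, and by the hypothesis the two ``desubstituted'' neighbourhoods must be translates of one another but shifted by a nonzero amount; iterating this shift along the infinite word forces a periodic pattern of the desubstitution structure, which pulls back to a genuine power in $x$.

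Concretely, here is the order of the steps. First, I would set up the desubstitution: write $x$ as a concatenation of blocks $\sigma(x_n)$, $n \in \Z$, with cutting points exactly $E(x,\sigma) = f^{(1)}(\Z)$, and record that consecutive cutting points are spaced between $\langle\sigma\rangle$ and $|\sigma|$ apart. Second, assuming $m \notin E(x,\sigma)$, I would locate the nearest cutting points $f^{(1)}(p) < m < f^{(1)}(p+1)$ and, using the window equality, transport the cutting-point structure from around $f^{(1)}(i)$ to around $m$; this yields two distinct alignments of the block decomposition on a common long factor, i.e., a nontrivial self-overlap of the sequence of $\sigma$-images. Third, I would use Lemma~\ref{lemme:longueurs_u_v_REC} (with $n=1$) to convert statements about lengths of $\sigma$-images back into statements about the numbers of letters involved, so that the self-overlap at the level of letters has controlled length: a repeated pattern of roughly $R = \lceil N^2(k+1)+2N\rceil$ letters. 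Fourth, a counting argument over the finitely many possible ``types'' of such a configuration — there are at most $p_x(R)\left(\sum_{R/N \le i \le RN+2} p_x(i)\right)$ of them, which is why $Q$ is defined as one more than this quantity — shows that within the window of radius $M = R|\sigma^Q|$ some type must repeat, and repetition of a type forces a return to the same relative position, producing a power of exponent at least $k$ in $x$. This contradicts $k$-power-freeness, and the contraposition is complete.

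The main obstacle, and the part requiring genuine care rather than bookkeeping, is the third and fourth steps: making precise what a ``type'' of the overlap configuration is so that (a) the number of types is bounded by the stated product of complexity-function values, and (b) a repeated type genuinely yields a $k$-th power and not merely a long repetition with a defect. This is exactly where Mossé's original argument is delicate, and where one must be careful about the distinction between repetitions in the letter sequence $(x_n)$ versus repetitions in the actual symbolic sequence $x$: the bound $N$ on $|\sigma^n|/\langle\sigma^n\rangle$ is what guarantees that a power at the desubstituted level of exponent $k+$ (something) survives, after applying $\sigma$ and trimming, as a power of exponent at least $k$ in $x$ itself — hence the appearance of $N^2(k+1)$ in the definition of $R$. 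Once $R$ and $Q$ are pinned down this way, the final length estimate $M = R|\sigma^Q|$ drops out by noting that a window of $2M+1$ letters of $x$ contains at least $Q$ consecutive blocks' worth of desubstitution data on each side of $m$, which by the pigeonhole bound on types must contain a repetition.

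Everything else — the spacing bounds on cutting points, the translation of one window equality into another, the conversion between image-lengths and letter-counts — is routine and will be handled by direct appeal to Lemma~\ref{lemme:longueurs_u_v_REC} and to the definition of $f^{(1)}$ and $E(x,\sigma)$. I would present the proof so that the reader can see at each stage which numerical constant is being consumed, culminating in the verification that $R$ and $Q$ as defined in the statement suffice.
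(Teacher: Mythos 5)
Your sketch locates the right ingredients (desubstitution, Lemma~\ref{lemme:longueurs_u_v_REC}, a pigeonhole whose size explains $Q$, and $k$-power-freeness), but the way you deploy the pigeonhole is where the argument actually breaks. In the paper's proof the pigeonhole is \emph{not} taken inside one window at a single scale: one assumes the proposition fails, so that for \emph{every} radius $l_p = R|\sigma^p|$, $p=1,\dots,Q$, there is a separate bad pair $(i_p,j_p)$, and to each scale $p$ one attaches a pair of $\sigma^p$-preimage words $(u_p,v_p)$ (of lengths $R$ and $t_p+2\in[R/N,RN+2]$, which is where Lemma~\ref{lemme:longueurs_u_v_REC} is used with $n=p$, not $n=1$). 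Since the number of such pairs is $p_x(R)\bigl(\sum_{R/N\le i\le RN+2}p_x(i)\bigr)<Q$, two \emph{scales} $p<q$ carry the same pair, and the contradiction comes from comparing the level-$p$ and level-$q$ decompositions of the same words: one gets $\sigma^{q-p}(A_p)\,\sigma^q(\tilde v)\,\sigma^{q-p}(B_p)=A_q\,\sigma^q(\tilde v)\,B_q$, so either the two occurrences of the long word $\sigma^q(\tilde v)$ (length $\ge (R/N-2)\langle\sigma^q\rangle$) are shifted by at most $|\sigma^q|$, which by the choice of $R=\lceil N^2(k+1)+2N\rceil$ forces a $(k+1)$-power, or they coincide. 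Your version — sliding along the single window of radius $M$, counting ``types'' bounded by the same product, and claiming that a repeated type ``forces a return to the same relative position, producing a power of exponent at least $k$'' — does not deliver this: two positions with matching bounded local data do not give the same long word occurring at two slightly shifted positions, which is the only mechanism by which power-freeness can be invoked; nothing in your count controls the spacing between the repeated types, nor produces the $k$ consecutive equally spaced repetitions a $k$-th power requires. Relatedly, your reading of $M=R|\sigma^Q|$ as guaranteeing ``$Q$ consecutive blocks' worth of desubstitution data'' misidentifies the roles of the constants: $Q$ counts scales, and $R|\sigma^p|$ is chosen so that each scale-$p$ window contains at least $R-1$ points of $E(x,\sigma^p)$.

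The second gap is that your plan only foresees a contradiction with $k$-power-freeness. In the actual proof the power-free argument only rules out the \emph{misaligned} case ($A_q\neq\sigma^{q-p}(A_p)$); in the aligned case no power appears, and one must instead transport the cutting structure (using $E(x,\sigma^{q-p})\subset E(x,\sigma)$ and equation~\eqref{eq:claim end}) to conclude that $j_q$ would belong to $E(x,\sigma)$, contradicting its choice. A proof that ends only in ``hence a $k$-th power, contradiction'' cannot be correct as stated, because a single window equality with $m\notin E(x,\sigma)$ does not by itself force any periodicity — if it did, neither $Q$ nor the multi-scale construction would be needed at all.
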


\begin{proof}
We follow the lines of the proof of Theorem~\ref{theo:mosse} that is in~\cite{Kurka:2003}.
Obviously, if $l$ satisfies~\eqref{align:recM}, then so does $l'$ whenever $l' \geq l$.
Let us show that such an $l$, with  $R|\sigma^Q|$, exists.

We proceed by contradiction, assuming that for all $l$, there exist $i,j$ such that $x_{[i-l,i+l]} = x_{[j-l,j+l]}$ with $i \in E(x, \sigma)$ and $j \notin E (x,\sigma )$.
For any integer $p$ such that $0 < p \leq Q$, we consider the integer $l_p=R|\sigma^p|$.
Let $i_p$ and $j_p$ be some integers such that 
\begin{equation*}
	x_{[i_p-l_p,i_p+l_p]} = x_{[j_p-l_p,j_p+l_p]} \quad  \text{with } i_p \in E(x,\sigma ) \text{ and} \, j_p \notin E(x,\sigma ).
\end{equation*}
We let $r_p$ and $s_p$ denote the smallest integers such that
\begin{align*}
&	\card \left( [i_p - r_p,i_p[\, \cap E(x,\sigma^p )\right)  =  \left\lceil\frac{R}{2}\right\rceil 
	\quad \text{and} \\
&	\card \left( [i_p,i_p + s_p] \cap E(x,\sigma^p ) \right)  =  \left\lfloor\frac{R}{2}\right\rfloor+1.	
\end{align*}
There is an integer $i_p'$ such that 
\[
	f^{(p)}(i_p') = i_p-r_p
	\quad \text{and} \quad
	f^{(p)}(i_p'+R) = i_p+s_p.
\]
We set 
\[
	u_p = x_{[i_p',i_p'+R[}.
\] 
We have $\sigma^p(u_p) = x_{[i_p-r_p,i_p+s_p[}$. 

Notice that any interval of length $l_p$ contains at least $R-1$ elements of $E (x ,  \sigma^p )$.
We thus have $i_p - l_p  \leq i_p - r_p \leq    i_p + s_p \leq i_p + l_p$. 
Consequently we also have 
\begin{align}
\label{align:upq}
x_{[j_p-r_p,j_p+s_p[} = \sigma^p(u_p).
\end{align}
However $j_p - r_p$ does not need to belong to $E(x, \sigma^p )$.
Let $j_p'$ and $t_p$ denote the unique integers such that
\begin{equation}
\label{eq:bornes jp'}
\begin{array}{rcl}
	\f^{(p)}(j_p') < & j_p - r_p & \leq \f^{(p)}(j_p'+1)	;	\\
	\f^{(p)}(j_p'+t_p+1) \leq & j_p + s_p & < \f^{(p)}(j_p'+t_p+2).
\end{array}
\end{equation}
Consequently $\sigma^p (x_{[j_p'+1,j_p'+t_p]})$ is a factor of $\sigma^p (u_p)$ and $\sigma^p (u_p)$ is a factor of $\sigma^p (x_{[j_p',j_p'+t_p+1]})$.
By Lemma~\ref{lemme:longueurs_u_v_REC}, we have 
\begin{equation}\label{eq:bornes tp}
	R\frac{\langle \sigma^p \rangle}{|\sigma^p|}-2 \leq t_p \leq R \frac{|\sigma^p|}{\langle \sigma^p \rangle}.
\end{equation}
Hence
\[
	\frac{R}{N}-2 \leq t_p \leq R N.
\]
Let $v_p = x_{[j_p',j_p'+t_p+1]}$.
The number of possible pairs of words $(u_p,v_p)$ is at most
\[
	p_x(R) \left( \sum_{ \frac{R}{N} \leq i \leq {RN+2}} p_x(i)\right) 
	< Q.
\]  
Therefore, there exist $p$ and $q$ in $[1, Q]$ such that $p<q$ and $(u_p,v_p) = (u_q,v_q)$. 
In particular we also have $t_p=t_q$.
We write 
$$
t=t_p, \qquad u = u_p, \qquad v = v_p, \qquad \tilde{v} = x_{[j_p'+1,j_p'+t]}.
$$
Using the above notation we recall that we have
\begin{eqnarray}
	u	=&	x_{[i_p',i_p'+R[} &= x_{[i_q',i_q'+R[}, \\
\label{eq:W}
	v	=&	x_{[j_p',j_p'+t+1]} &= x_{[j_q',j_q'+t+1]}.	
\end{eqnarray}
Let $A_p$, $B_p$, $A_q$ and $B_q$ be the words 
\begin{eqnarray*}
	A_p &=& x_{[j_p-r_p,f^{(p)}(j_p'+1)[};		\\
	B_p &=& x_{[f^{(p)}(j_p'+t+1),j_p+s_p[};	\\
	A_q &=& x_{[j_q-r_q,f^{(q)}(j_q'+1)[};		\\
	B_q &=& x_{[f^{(q)}(j_q'+t+1),j_q+s_q[}.
\end{eqnarray*}
We thus have 
\begin{equation}
\label{eq:A_pB_p}
	x_{[j_p-r_p,j_p+s_p[} = A_p \sigma^p(\tilde{v} ) B_p
	\quad \text{and} \quad
	x_{[j_q-r_q,j_q+s_q[} = A_q \sigma^q(\tilde{v} ) B_q,
\end{equation}
with, using~\eqref{eq:bornes jp'},
\begin{equation}\label{eq:longueur Ap Bp}
\max \{|A_p|,|B_p|\} \leq |\sigma^p| 
\quad \text{and} \quad
\max \{|A_q|,|B_q|\} \leq |\sigma^q|. 
\end{equation}
From~\eqref{align:upq} and~\eqref{eq:A_pB_p}, we obtain
\[
	\sigma^{q-p}(A_p) \sigma^q(\tilde{v}) \sigma^{q-p}(B_p) = A_q \sigma^q(\tilde{v} ) B_q.
\]
We claim that 
\begin{equation}
\label{eq:claim}
A_q = \sigma^{q-p}(A_p) \quad \text{(and hence $B_q = \sigma^{q-p}(B_p)$)}.
\end{equation} 
If not, the word $\sigma^q( \tilde{v}  )$ has a prefix which is a power $w^r$ with $r=\left\lfloor\frac{|\sigma^q(\tilde{v})|}{||A_q| - |\sigma^{q-p}(A_p)||}\right\rfloor$.
Since, using~\eqref{eq:bornes tp} and~\eqref{eq:longueur Ap Bp}, 
\[
	|\sigma^q( \tilde{v})| \geq t \langle \sigma^q\rangle \geq \left(\frac{R}{N}-2\right)\langle \sigma^q\rangle
\quad \hbox{ and } \quad 
	||A_q| - |\sigma^{q-p}(A_p)|| \leq |\sigma^q|,
\]
we deduce from the choice of $R$ that $r \geq k+1$, which contradicts the definition of $k$.
We thus have $A_q = \sigma^{q-p}(A_p)$ and $B_q = \sigma^{q-p}(B_p)$.

We now show that 
\begin{equation}
\label{eq:claim end}
	[j_q-r_q,j_q+s_q] \cap E(x,\sigma) = ([i_q-r_q,i_q+s_q] \cap E(x,\sigma)) - (i_q-j_q).
\end{equation}
This will contradict the fact that $i_q$ belongs to $E(x,\sigma)$ and $j_q$ does not.

By~\eqref{eq:W}, we have
\[
	\sigma^p(v) 
	= x_{[f^{(p)}(j_p'),f^{(p)}(j_p'+t+2)[} 
	= x_{[f^{(p)}(j_q'),f^{(p)}(j_q'+t+2)[}.
\]
Since $\sigma^p(u)$ is a factor of $\sigma^p(v)$, we deduce from~\eqref{eq:bornes jp'} that there exists $m_q \in \Z$ such that 
\begin{align}
\label{align:qqq}
	f^{(p)}(j_q') < m_q -r_p < m_q + s_p < f^{(p)}(j_q'+t+2)
\end{align}
and 
\[
	x_{[m_q-r_p,m_q+s_p[} = \sigma^p(u) = A_p \sigma^p(\tilde{v}) B_p.
\]
By applying $\sigma^{q-p}$, we obtain 
\[
	x_{[f^{(q-p)}(m_q-r_p),f^{(q-p)}(m_q+s_p)[} = A_q \sigma^q(\tilde{v}) B_q ,
\]
and, from \eqref{align:qqq}, 
\[
	f^{(q)}(j_q') < f^{(q-p)}(m_q -r_p) < f^{(q-p)}(m_q + s_p) < f^{(q)}(j_q'+t+2) .
\]
As we also have 
\[
	x_{[j_q - r_q,j_q+s_q[} = A_q \sigma^q(\tilde{v}) B_q
\]
with, by~\eqref{eq:bornes jp'},
\[
	\f^{(q)}(j_q') <  j_q - r_q \leq \f^{(q)}(j_q'+1) \leq \f^{(q)}(j_q'+t+1) \leq j_q + s_q < \f^{(q)}(j_q'+t+2),
\]
we apply the same argument as to show~\eqref{eq:claim} and get $j_q-r_q = f^{(q-p)}(m_q-r_p)$ (hence $j_q+s_q = f^{(q-p)}(m_q+s_p)$).
We thus get that $j_q-r_q$ belongs to $E(x, \sigma^{q-p}) \subset E(x, \sigma)$.
Since we also have 
\begin{eqnarray*}
	x_{[{f^{(1)}}^{-1}(j_q-r_q),{f^{(1)}}^{-1}(j_q+s_q)[} = \sigma^{q-p-1}(x_{[m_q-r_p,m_q+s_p[}) = \sigma^{q-p-1}(A_p \sigma^p(\tilde{v}) B_p), \\
	x_{[{f^{(1)}}^{-1}(i_q-r_q),{f^{(1)}}^{-1}(i_q+s_q)[} = \sigma^{q-1}(x_{[i_q',i_q'+R[}) = \sigma^{q-p-1}(A_p \sigma^p(\tilde{v}) B_p),
\end{eqnarray*}
we get 
\[
	x_{[{f^{(1)}}^{-1}(j_q-r_q),{f^{(1)}}^{-1}(j_q+s_q)[} = x_{[{f^{(1)}}^{-1}(i_q-r_q),{f^{(1)}}^{-1}(i_q+s_q)[}
\]
with $j_q-r_q, i_q-r_q$ belonging to $E(x,\sigma)$.
By applying $\sigma$ to these two word, we thus obtain~\eqref{eq:claim end}, which ends the proof.
\end{proof}

In Proposition~\ref{prop:rec1}, we compute a constant such that any long enough word can be cut into words in $\sigma(A)$ in a unique way except for a prefix and a suffix of bounded length.
However it does not give information on the letters in $A$ that the words in $\sigma(A)$ come from.
A key argument in Mossé's original proof is to prove the existence of an integer $d$ such that for all $a,b \in A$, if $\sigma^n(a) = \sigma^n(b)$ for some $n$, then $\sigma^d(a)=\sigma^d(b)$.
We then prove that the constant $L$ of Theorem~\ref{theo:mosse} can be taken equal to $M + |\sigma^{d+1}|$, where $M$ is the constant of Proposition~\ref{prop:rec1} associated with $\sigma^{d+1}$.
Theorem~\ref{theo:exponent card A} below ensures that we can take $d = \# A-1$, which ends the proof of Theorem~\ref{theo:maindetail}.

\begin{theorem}[{\cite[Theorem 3]{Ehrenfeucht&Rozenberg:1978}}]
\label{theo:exponent card A}
Let $\sigma:A^* \to A^*$ be a morphism such that $\sigma(A) \neq \{\varepsilon\}$.
For any words $u,v \in A^*$, we have 
\[
	\sigma^{\# A-1}(u) \neq \sigma^{\# A-1}(v) \Rightarrow \forall n, \sigma^n(u) \neq \sigma^n(v). 
\] 
\end{theorem}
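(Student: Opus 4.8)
The plan is to recast the statement in terms of the chain of equivalence relations $\sim_n$ on $A^*$ defined by $u\sim_n v \Longleftrightarrow \sigma^n(u)=\sigma^n(v)$, and to prove that this chain is already stationary from index $\#A-1$ on. First I would record two elementary facts. Since $\sigma^n(u)=\sigma^n(v)$ implies $\sigma^{n+1}(u)=\sigma^{n+1}(v)$, we have $\sim_0\,\subseteq\,\sim_1\,\subseteq\,\cdots$; and if $\sim_n=\sim_{n+1}$, then also $\sim_{n+1}=\sim_{n+2}$, because $\sigma^{n+2}(u)=\sigma^{n+2}(v)$ says exactly that $\sigma(u)\sim_{n+1}\sigma(v)$, hence $\sigma(u)\sim_n\sigma(v)$, i.e. $\sigma^{n+1}(u)=\sigma^{n+1}(v)$. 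So the chain increases strictly for a while and is constant afterwards, and the theorem is equivalent to the assertion that there are at most $\#A-1$ strict steps: then $\sim_{\#A-1}=\sim_n$ for every $n$, so $\sigma^{\#A-1}(u)\ne\sigma^{\#A-1}(v)$ forces $\sigma^n(u)\ne\sigma^n(v)$ for all $n$.

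Next I would note that $\sim_n=\sim_{n+1}$ holds precisely when $\sigma$ is injective on the submonoid $\sigma^n(A^*)$, and would carry along, by induction on $n$, a factorization $\sigma^n=g_n\circ e_n$ with $e_n\colon A^*\to C_n^*$, $g_n\colon C_n^*\to A^*$ injective, and $|C_n|$ non-increasing, starting from $\sigma^0=\mathrm{id}=\mathrm{id}\circ\mathrm{id}$ with $C_0=A$. Writing $\sigma^{n+1}=(\sigma\circ g_n)\circ e_n$: if $\sigma\circ g_n$ is injective, keep $C_{n+1}=C_n$ and $g_{n+1}=\sigma\circ g_n$, and observe that then $\sim_{n+1}$ is again the kernel of $e_n$, which equals $\sim_n$, so the step is not strict; if $\sigma\circ g_n$ is not injective, invoke the Ehrenfeucht--Rozenberg factorization lemma (stated below) to write $\sigma\circ g_n=g'\circ e'$ with $g'$ injective and intermediate alphabet of size at most $|C_n|-1$, and set $g_{n+1}=g'$, $e_{n+1}=e'\circ e_n$, so that $|C_{n+1}|<|C_n|$. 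Hence every strict step is accompanied by a strict drop of $|C_n|$; since $|C_0|=\#A$ and $|C_n|\ge 1$ (for a non-erasing $\sigma$ -- in particular for a primitive one -- the morphism $\sigma^n$ is not the constant map onto $\varepsilon$, hence cannot factor through the empty alphabet), there are at most $\#A-1$ strict steps, as required.

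The point that needs genuine work is the factorization lemma used above: if $h\colon D^*\to B^*$ is not injective, then $h=g\circ e$ for some $e\colon D^*\to C^*$ and injective $g\colon C^*\to B^*$ with $|C|<|D|$. The proof dichotomizes on whether two letters of $D$ have the same $h$-image: if so, merge them, which drops one letter, and recurse on the resulting morphism; if not, the set $\{h(d):d\in D\}$ of $|D|$ distinct words cannot be a code (distinct letter images together with a code would already make $h$ injective), so by the defect theorem it lies in $Y^*$ for some code $Y$ with $|Y|<|D|$, and one takes $e(d)$ to be the $Y$-decomposition of $h(d)$ and $g$ the embedding sending each letter of $Y$ to the corresponding word. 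I expect this lemma -- together with the verification that the recursion terminates with $g$ still injective and that the factorizations $\sigma^n=g_n\circ e_n$ chain up consistently -- to be the only non-routine ingredient; everything else is bookkeeping. (The statement as written also tacitly sets aside the degenerate erasing situations in which some power of $\sigma$ kills a letter, which is harmless here since the morphisms of interest are non-erasing.)
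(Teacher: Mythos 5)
The paper does not prove this statement at all: it is quoted directly from Ehrenfeucht and Rozenberg, so there is no internal proof to compare yours with. Your argument is a correct, self-contained reconstruction of essentially the Ehrenfeucht--Rozenberg ``simplification'' proof: the kernels $\sim_n$ of $\sigma^n$ increase, become stationary as soon as two consecutive ones coincide, and carrying a factorization $\sigma^n=g_n\circ e_n$ with $g_n$ injective shows that every strict growth of the kernel forces the intermediate alphabet $C_n$ to shrink, which can happen at most $\#A-1$ times; the only real ingredient is the factorization lemma for non-injective morphisms, which you correctly reduce to letter-merging plus the defect theorem. Two small points of hygiene: the defect theorem is stated for sets of nonempty words, so in the lemma you should dispose of a letter with empty image by deleting it rather than by the defect theorem (in your application this never occurs, since $g_n$ injective forces $g_n$ non-erasing, hence $\sigma\circ g_n$ is non-erasing when $\sigma$ is); and ``$\sim_{\#A-1}=\sim_n$ for every $n$'' should read ``for every $n\geq\#A-1$'', the smaller $n$ being handled by monotonicity --- both are cosmetic.

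Your closing caveat about erasing morphisms is not just prudence; it is necessary, because the statement as printed, with the sole hypothesis $\sigma(A)\neq\{\varepsilon\}$, is false in the erasing case: on $A=\{a,b\}$ take $\sigma(a)=b$ and $\sigma(b)=\varepsilon$; then $\sigma(A)\neq\{\varepsilon\}$ and $\sigma(a)\neq\sigma(b)$, yet $\sigma^2(a)=\sigma^2(b)=\varepsilon$. Your counting argument pinpoints the correct hypothesis: the exponent $\#A-1$ is available exactly when $|C_n|\geq 1$ for all $n$, i.e., when no power of $\sigma$ sends every letter to $\varepsilon$ --- in particular for non-erasing, hence primitive, morphisms, which is the only case the paper uses (Proposition~\ref{prop:rec2} and Theorem~\ref{theo:maindetail} invoke it for a primitive $\sigma$); for general erasing morphisms the same argument only yields the exponent $\#A$.
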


We give the proof of Mossé's second step result for the sake of completeness.

\begin{proposition}
\label{prop:rec2}
Let $\sigma :A^* \to A^*$ be a morphism with an admissible fixed point $x \in A^\Z$.
Let $d \in \{1,2,\dots,\# A\}$ be such that for any words $u,v \in \cL(x)$,  
\[
	\sigma^{d-1}(u) \neq \sigma^{d-1}(v) \Rightarrow \forall n, \sigma^n(u) \neq \sigma^n(v). 
\] 
If $M$ is a constant such that for all $i,m \in \Z$,
\[
x_{[f^{d}(i)-M,f^{d}(i)+M]} = x_{[m-M,m+M]} \Longrightarrow m \in E (x, \sigma^{d}),
\]
then $\sigma$ is recognizable on $x$ and its constant of recognizability for $x$ is at most $M + |\sigma^{d}|$.
\end{proposition}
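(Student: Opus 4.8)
The plan is to use the hypothesis on $M$ to first recover, from a long enough window around $f^{(1)}(i)$, the position of $f^{(1)}(i)$ in the $\sigma^d$-cutting of $x$, and then use the injectivity-stabilization property of $d$ to recover the letter $x_i$ itself. Concretely, I would set $L = M + |\sigma^d|$ and suppose $x_{[m-L,m+L]} = x_{[f^{(1)}(i)-L,\,f^{(1)}(i)+L]}$ for some $i,m\in\Z$. The first step is to pass from the level-$1$ cutting to the level-$d$ cutting: since $E(x,\sigma^d)\subseteq E(x,\sigma)$, there is a unique integer $i'$ with $f^{(1)}(i') \le f^{(1)}(i) < f^{(1)}(i'+1)$ such that $f^{(d)}$ evaluated appropriately lands on the left endpoint of the $\sigma^d$-block containing position $f^{(1)}(i)$; more precisely, write $f^{(1)}(i) = f^{(d)}(n) + e$ where $f^{(d)}(n)\in E(x,\sigma^d)$ is the largest such value $\le f^{(1)}(i)$ and $0 \le e < |\sigma^d|$ (here I use that any two consecutive elements of $E(x,\sigma^d)$ differ by at most $|\sigma^d|$). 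Then the window of radius $L = M+|\sigma^d|$ around $f^{(1)}(i)$ contains the window of radius $M$ around $f^{(d)}(n)$, so the equality of $L$-windows forces $x_{[(m-e)-M,(m-e)+M]} = x_{[f^{(d)}(n)-M,\,f^{(d)}(n)+M]}$, and the hypothesis on $M$ gives $m - e \in E(x,\sigma^d)$, say $m - e = f^{(d)}(n')$.

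The second step is to identify the letters. From $m - e = f^{(d)}(n')$ and $f^{(1)}(i) - f^{(d)}(n) = e$, together with the fact that the $\sigma^d$-block starting at $f^{(d)}(n)$ reads $\sigma^d(x_n)$ and the one starting at $f^{(d)}(n')$ reads $\sigma^d(x_{n'})$, the equality of the two $L$-windows (restricted to these blocks, which are contained in the windows since $e < |\sigma^d| \le L - M$) yields $\sigma^d(x_n) = \sigma^d(x_{n'})$, and in fact the same offset $e$ into each block, so that $\sigma^d(x_n)_e = \sigma^d(x_{n'})_e$. Now I would invoke the defining property of $d$: applied to the one-letter words $u = x_n$ and $v = x_{n'}$ (both in $\cL(x)$), $\sigma^d(x_n) = \sigma^d(x_{n'})$ does not yet give $x_n = x_{n'}$, but it does give $\sigma^{d-1}(x_n)$ and $\sigma^{d-1}(x_{n'})$ — hmm, here I must be careful about the direction of the hypothesis. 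The hypothesis states $\sigma^{d-1}(u)\ne\sigma^{d-1}(v) \Rightarrow \forall n,\ \sigma^n(u)\ne\sigma^n(v)$; contrapositively, if $\sigma^{d}(x_n) = \sigma^{d}(x_{n'})$ then $\sigma^{d-1}(x_n) = \sigma^{d-1}(x_{n'})$. This does not immediately give $x_n = x_{n'}$, so the identification of the letter must instead be arranged at the level of the block read-off: I would show that $m$ lies in the image of $f^{(1)}$ by locating, inside the block $\sigma^d(x_n) = \sigma^{d-1}(\sigma(x_n))$, the position corresponding to a boundary of the level-$(d-1)$ refinement, and matching it on the $x_{n'}$ side using $\sigma^{d-1}(x_n) = \sigma^{d-1}(x_{n'})$; combined with $\sigma(x_n)$ and $\sigma(x_{n'})$ being read off at corresponding offsets inside equal words $\sigma^{d-1}(\cdot)$, one concludes $m = f^{(1)}(j)$ for the appropriate $j$ and $x_i = x_j$ by comparing the single letters at the matched offset.

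I expect the main obstacle to be exactly this last bookkeeping: tracking the offset $e$ through the nested cuttings $A \xrightarrow{\sigma} A^* \xrightarrow{\sigma^{d-1}} A^*$ and making sure that the position $m$ one recovers is genuinely of the form $f^{(1)}(j)$ with $x_j = x_i$, rather than merely concluding $\sigma^d(x_i\text{-block})$ matches. The point is that $f^{(1)}(i)$ sits at offset $e$ inside $\sigma^d(x_n)$; decompose $e$ according to the $\sigma$-cutting of $\sigma^{d-1}(x_n)$, i.e. $f^{(1)}(i)$ is a boundary of the $\sigma$-cutting of $x$ precisely when $e$ is the $\sigma^{d-1}$-image-length of a prefix of $\sigma(x_n)$ — and one must check this combinatorial condition transfers verbatim to the $n'$ side because $\sigma^{d-1}(x_n) = \sigma^{d-1}(x_{n'})$ and (after the reduction) $\sigma(x_n)$, $\sigma(x_{n'})$ occupy the same positions inside these equal words. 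Once the boundary condition transfers, $m \in E(x,\sigma)$, say $m = f^{(1)}(j)$, and reading the letter at the common offset gives $x_i = x_j$. Everything else — the containment of windows, the bound $L = M + |\sigma^d|$, the use of $E(x,\sigma^d)\subseteq E(x,\sigma)$ and of $|\sigma^d|$ as a gap bound — is routine.
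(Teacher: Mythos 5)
Your outline follows the same route as the paper's proof (use the hypothesis on $M$ to locate the $\sigma^d$-cutting near $f^{(1)}(i)$ and near $m$, compare the two $\sigma^d$-blocks, descend to level $d-1$ via the defining property of $d$, then transfer the offset), but two steps are genuinely incomplete as written. First, $\sigma^d(x_n)=\sigma^d(x_{n'})$ does not follow from the single application of the hypothesis on $M$ at the left endpoint $f^{(d)}(n)$. Knowing $m-e=f^{(d)}(n')$ and that the two $L$-windows agree only tells you that the factors starting at $f^{(d)}(n)$ and at $f^{(d)}(n')$ coincide as far as the windows reach; if $|\sigma^d(x_{n'})|\neq|\sigma^d(x_n)|$, one block is a proper prefix of the other's continuation and nothing forces equality. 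You must also apply the hypothesis at the right endpoint $f^{(d)}(n+1)$ (its $M$-window still lies in the $L$-window because $0<f^{(d)}(n+1)-f^{(1)}(i)\leq|\sigma^d|$) to get $m-e+|\sigma^d(x_n)|\in E(x,\sigma^d)$, and then either rule out a $\sigma^d$-cut point strictly in between on the $m$-side (apply the same hypothesis once more with the roles of the two windows exchanged) or carry along the weaker conclusion $\sigma^d(x_n)=\sigma^d(v)$ with $v\in\cL(x)$ a word, which the property of $d$ also covers. This is exactly why the paper introduces both $f^{(d)}(k)$ and $f^{(d)}(k+1)$.

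Second, the decisive transfer step --- which you yourself flag as the main obstacle --- is set up with the wrong decomposition of the block. The positions of $E(x,\sigma)$ inside $x_{[f^{(d)}(n),f^{(d)}(n+1)[}=\sigma^d(x_n)$ are governed by $\sigma^d(x_n)=\sigma(\sigma^{d-1}(x_n))$: since $f^{(d)}(n)=f^{(1)}(f^{(d-1)}(n))$, the point $f^{(d)}(n)+e$ lies in $E(x,\sigma)$ exactly when $e=|\sigma(w)|$ for a prefix $w$ of $x_{[f^{(d-1)}(n),f^{(d-1)}(n+1)[}=\sigma^{d-1}(x_n)$. Your criterion (``$e$ is the $\sigma^{d-1}$-image-length of a prefix of $\sigma(x_n)$'', i.e.\ the decomposition $\sigma^{d-1}(\sigma(x_n))$) describes level-$(d-1)$ cut points, not level-$1$ ones, and your justification that the criterion transfers to the $n'$-side invokes ``$\sigma(x_n)$ and $\sigma(x_{n'})$ occupy the same positions inside these equal words'', which is precisely the information you do not have: $x_n=x_{n'}$, hence any alignment of $\sigma(x_n)$ with $\sigma(x_{n'})$, may fail, and avoiding that assumption is the whole purpose of $d$. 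With the correct decomposition the transfer needs only $\sigma^{d-1}(x_n)=\sigma^{d-1}(x_{n'})$: writing $e=|\sigma(x_{[f^{(d-1)}(n),i[})|$ and setting $j=f^{(d-1)}(n')+(i-f^{(d-1)}(n))$, one gets $f^{(1)}(j)=f^{(d)}(n')+e=m$ and $x_j=x_i$, because $x_{[f^{(d-1)}(n'),f^{(d-1)}(n'+1)[}=\sigma^{d-1}(x_{n'})=\sigma^{d-1}(x_n)=x_{[f^{(d-1)}(n),f^{(d-1)}(n+1)[}$. This is how the paper concludes; your plan is repairable along these lines, but as stated the key step would not go through.
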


\begin{proof}
Let $i,m \in \mathbb{Z}$ such that 
$$
	x_{[f^{(1)}(i)-M-|\sigma^d|,f^{(1)}(i)+M+|\sigma^d|]}
	=
	x_{[m-M-|\sigma^d|, m+M+|\sigma^d|]}.	
$$
By definition of $M$, there exists $j \in \mathbb{Z}$ such that $m = f^{(1)}(j)$.
Our goal is to show that $x_i = x_j$.

There exists $k \in \mathbb{Z}$ such that 
\[
	f^{(1)}(i)-|\sigma^d | < f^{(d)}(k) \leq f^{(1)}(i) < f^{(d)}(k+1) \leq f^{(1)}(i)+|\sigma^d |. 
\]
In particular, this implies that $f^{(d-1)}(k) \leq i < f^{(d-1)}(k+1)$.

Consider $c = f^{(1)}(i) - f^{(d)}(k)$ and $d = f^{(d)}(k+1) - f^{(1)}(i)$.
We have 
\begin{eqnarray*}
	x_{[f^{(d)}(k)-M,f^{(d)}(k)+M]} 
	&=& 
	x_{[f^{(1)}(j)-c-M,f^{(1)}(j)-c+M]};	\\
	x_{[f^{(d)}(k+1)-M,f^{(d)}(k+1)+M]} 
	&=& 
	x_{[f^{(1)}(j)+d-M,f^{(1)}(j)+d+M]}.
\end{eqnarray*}
By definition of $M$, there exists $l \in \mathbb{Z}$ such that 
\[
	f^{(d)}(l) = f^{(1)}(j)-c
	\quad \text{and} \quad
	f^{(d)}(l+1) = f^{(1)}(j)+d.
\]
We thus have $f^{(d-1)}(l) \leq j < f^{(d-1)}(l+1)$, and,
\[
	x_{[f^{(d)}(k),f^{(d)}(k+1)[} = x_{[f^{(d)}(l),f^{(d)}(l+1)[}.
\]
Hence $\sigma^d(x_k) = \sigma^d(x_l)$.
By definition of $d$, we also have $\sigma^{d-1}(x_k) = \sigma^{d-1}(x_l)$.
Hence 
\[
	x_{[f^{(d-1)}(k),f^{(d-1)}(k+1)[} = x_{[f^{(d-1)}(l),f^{(d-1)}(l+1)[}.
\]
Since we have $f^{(1)}(i) - f^{(d)}(k) = f^{(1)}(j) - f^{(d)}(l)$, we also have $i - f^{(d-1)}(k) = j - f^{(d-1)}(l)$.
Hence $x_i = x_j$.
\end{proof}

%%%%%%%%%%%%%%%%%%%%%%%%%%%%%%%%%%%%%%%%%%%%%%%%%%%%%%%
\section{Proof of Theorem~\ref{theo:main}}
\label{section: primitive case}
%%%%%%%%%%%%%%%%%%%%%%%%%%%%%%%%%%%%%%%%%%%%%%%%%%%%%%%

In this section, we show that the constants appearing in Theorem~\ref{theo:maindetail} can all be bounded by some computable constants.
In all what follows, we assume that $\sigma:A^* \to A^*$ is a primitive morphism. 
By taking a power of $\sigma$ if needed, we can assume that it has an admissible fixed point $x \in A^\Z$.
Furthermore, we have $\cL(x) = \cL(y)$ for all admissible fixed points $y$ of $\sigma$. 
We let $\cL(\sigma)$ denote this set.
The constants appearing in Theorem~\ref{theo:maindetail} are thus the same whatever the admissible fixed point we consider and the morphism is recognizable.

With the morphism $\sigma$, one associates its {\em incidence matrix} $M_\sigma$ defined by $(M_\sigma)_{a,b} = |\sigma(b)|_a$, where $|u|_a$ denotes the number of occurrences of the letter $a$ in the word $u$.

\begin{lemma}[\cite{Horn&Johnson:1990}]
\label{lemma:wielandt}
A $d \times d$ matrix $M$ is primitive if, and only if, there is an integer $k \leq d^2 -2d +2$ such that $M^k$ contains only positive entries. 
\end{lemma}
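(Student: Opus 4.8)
The plan is to prove the two implications separately, the forward one being immediate and the converse being the substance (Wielandt's bound). For the \emph{if} direction, if $M^k$ has only positive entries for some $k$, then $M$ is primitive by definition and nothing is to be shown. For the \emph{only if} direction I would pass to the associated digraph $D$ on vertex set $\{1,\dots,d\}$, with an arc $j \to i$ whenever $M_{ij} > 0$, so that $(M^k)_{ij} > 0$ if and only if $D$ has a walk of length exactly $k$ from $j$ to $i$. Thus $M^k$ is positive precisely when every ordered pair of vertices is joined by a walk of length exactly $k$, and I must show that $k = d^2-2d+2$ works. I would first record the standard equivalence that primitivity of $M$ means $D$ is strongly connected with the greatest common divisor of its cycle lengths equal to $1$ (\emph{aperiodic}): strong connectivity is forced because otherwise some entry of every power vanishes, and aperiodicity because if every closed walk had length divisible by $p>1$ then $(M^k)_{ii}$ and $(M^{k+1})_{ii}$ could never both be positive.

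Let $s$ denote the girth of $D$, i.e.\ the length of a shortest directed cycle (the case $d=1$ being trivial, since then $M$ is primitive iff $M_{11}>0$ and the bound reads $k \leq 1$). The first key step is the bound $s \leq d-1$: every simple cycle of a $d$-vertex digraph has length at most $d$, and if the shortest had length $s=d$ then every simple cycle would be Hamiltonian of length $d$, forcing the gcd of cycle lengths to be $d > 1$ and contradicting aperiodicity. Hence $s \leq d-1$.

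The heart of the proof is the exponent estimate $\gamma(M) \leq d + s(d-2)$, where $\gamma(M)$ is the least $k$ with $M^k>0$. I would fix a shortest cycle $C$ of length $s$ with vertex set $W$, and construct, for an arbitrary ordered pair $(u,v)$ and every length $\ell \geq d + s(d-2)$, a walk of length exactly $\ell$ from $u$ to $v$. The construction routes $u$ into $W$ along a shortest path (at most $d-s$ arcs, since at most $d-s$ vertices lie outside $C$), then circulates around $C$ to adjust the walk length modulo $s$, exploits that by aperiodicity the lengths of closed walks based in $W$ together with $s$ have greatest common divisor $1$ so that a Frobenius / numerical-semigroup argument realises every sufficiently large length, and finally exits to $v$ along a shortest path of at most $d-s$ arcs. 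Bookkeeping these three contributions yields the bound $d + s(d-2)$. Combining it with $s \leq d-1$ gives $\gamma(M) \leq d + (d-1)(d-2) = d^2-2d+2$, which is exactly the asserted value of $k$.

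The step I expect to be the main obstacle is the tight accounting in the exponent estimate: getting the residue-adjustment argument to produce walks of \emph{every} length beyond $d+s(d-2)$, rather than merely beyond some larger Frobenius-type threshold, requires choosing the auxiliary closed walk and the entry/exit paths so that their lengths overlap efficiently with the gaps left by $C$. That the constant cannot be improved is witnessed by the Wielandt graph on $d$ vertices (a Hamiltonian cycle together with one chord creating a cycle of length $d-1$), whose exponent equals $d^2-2d+2$; this shows the bound is sharp and that no slack in the counting can be afforded.
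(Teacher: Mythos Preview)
The paper does not prove this lemma at all: it is stated with a citation to Horn and Johnson's \emph{Matrix Analysis} and used as a black box in the proof of Proposition~\ref{prop:sublinrec}. There is therefore no paper proof to compare against.

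On its own merits, your outline is the standard route to Wielandt's bound and is structurally correct: the reduction to the digraph, the girth bound $s \leq d-1$ from aperiodicity, and the Dulmage--Mendelsohn inequality $\gamma(M) \leq d + s(d-2)$ combine exactly as you say. The one soft spot you yourself flag is real: a generic Frobenius/numerical-semigroup argument on the cycle lengths gives only a threshold of order $s \cdot (\text{something})$ that is typically larger than $d + s(d-2)$, so the tight constant requires a more careful construction. The usual way to close this gap is not to invoke the Frobenius number at all but to argue directly: from any vertex $u$ there are walks to a fixed vertex $w$ on the short cycle of every length in some interval $[\,a, a+s-1\,]$ with $a \leq d-1$ (obtained by taking a shortest path of each residue class and padding with laps of $C$), and symmetrically from $w$ to $v$; concatenating and adding further laps of $C$ then hits every length from $d + s(d-2)$ onward. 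If you replace the Frobenius remark with that explicit interval-covering step, the proof is complete and matches the treatment in the cited reference.
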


Given an infinite word $x \in A^\Z$ and a word $u \in \cL(x)$, a {\em return word} to $u$ in $x$ is a word $r$ such that $ru$ belongs to $\cL(x)$, $u$ is a prefix of $ru$ and $ru$ contains exactly two occurrences of $u$.
The infinite word $x$ is {\em linearly recurrent} if it is {\em recurrent} (all words in $\mathcal{L} (x)$ appear infinitely many times in $x$) and there exists some constant $K$ such that for all $u \in \cL(x)$, any return word to $u$ has length at most $K|u|$.
The set of return words to $u$ in $x$ is denoted $\cR_{x,u}$.

The next two results give bounds on the constants appearing in Theorem~\ref{theo:maindetail}.

\begin{theorem}[\cite{Durand&Host&Skau:1999}]
\label{theo:encad}
If $x \in A^\Z$ is a aperiodic and linearly recurrent sequence (with constant $K$), then $x$ is $(K+1)$-power-free and $p_x(n) \leq K n$ for all $n$.
\end{theorem}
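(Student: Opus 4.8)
The plan is to deduce both conclusions from one elementary property of linearly recurrent words: \emph{for every $n \ge 1$, every factor of $x$ of length $(K+1)n - 1$ contains every element of $\cL_n(x)$}. To prove this, fix $u \in \cL_n(x)$. Saying that every return word to $u$ has length at most $K|u| = Kn$ is exactly saying that two consecutive occurrences of $u$ in $x$ lie at distance at most $Kn$ (and, by recurrence, occurrences of $u$ are unbounded in both directions). Take a factor $W = x_{[a, a + (K+1)n - 1[}$. If $u$ occurs at position $a$ we are done, as $(K+1)n - 1 \ge n$. Otherwise the last occurrence of $u$ at a position $\le a$ is at some $p_0 \le a - 1$, and the next occurrence is at some $p_1$ with $a < p_1 \le p_0 + Kn \le a + Kn - 1$; then the block $x_{[p_1, p_1 + n[}$, which equals $u$, lies entirely inside $W$. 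Since $u$ was arbitrary, $W$ contains all of $\cL_n(x)$.

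Granting this, the bound $p_x(n) \le Kn$ is immediate: a word of length $(K+1)n - 1$ has exactly $Kn$ factors of length $n$ (counted with multiplicity), so $\#\cL_n(x) \le Kn$.

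For the power-freeness I would argue by contradiction. Suppose $u^{K+1} \in \cL(x)$ with $|u| = n \ge 1$. Then $u^{K+1}$ is a factor of $x$ of length $(K+1)n \ge (K+1)n - 1$, so by the property above it contains every word of $\cL_n(x)$. But $u^{K+1}$ has period $n$, so each of its factors of length $n$ is a cyclic conjugate of $u$, and there are at most $n$ of those; hence $p_x(n) \le n$. Since $x$ is aperiodic, this contradicts the Morse--Hedlund theorem (an aperiodic infinite word satisfies $p_x(n) \ge n + 1$ for all $n \ge 1$). Thus no such $u$ exists and $x$ is $(K+1)$-power-free.

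The one place that needs a little care is the choice of window length in the first step: taking $(K+1)n - 1$ rather than the more obvious $(K+1)n$ is precisely what makes the complexity bound sharp ($Kn$ in place of $Kn + 1$), and it is what forces the small case distinction on whether $W$ begins with an occurrence of $u$. Everything else is bookkeeping, and aperiodicity enters only at the final step through Morse--Hedlund; the two displayed bounds otherwise rest solely on linear recurrence.
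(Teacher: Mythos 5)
Your proof is correct. Note that the paper gives no proof of this statement at all --- it is quoted from Durand--Host--Skau \cite{Durand&Host&Skau:1999} --- and your argument is essentially the standard one behind that reference: the return-word bound makes occurrences of each $u\in\cL_n(x)$ syndetic with gap at most $Kn$, so every window of length $(K+1)n-1$ contains all of $\cL_n(x)$, which gives $p_x(n)\le Kn$ by counting and rules out $(K+1)$-powers via Morse--Hedlund and aperiodicity. The only point to state a bit more carefully is your parenthetical that occurrences of $u$ are unbounded in both directions of the two-sided sequence; this is the intended (uniformly recurrent) reading of linear recurrence and is what licenses the choice of the last occurrence $p_0\le a$, so it is worth one explicit sentence rather than an aside.
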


\begin{proposition}[\cite{Durand:1998}]
\label{prop:sublinrec}
Let $\sigma : A^* \to A^*$ be an aperiodic primitive morphism and $x$ be one of its admissible fixed points. 
Then we have 
\[
	|\sigma^n| \leq |\sigma|^{(\#A)^2} \langle \sigma^n\rangle
	\qquad \text{for all } n
\]
and $x$ is linearly recurrent for some constant 
\[
	K_\sigma < |\sigma|^{4(\#A)^2}.
\]
\end{proposition}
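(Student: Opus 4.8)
The plan is to establish the two assertions separately; the first is routine and the second rests on a single combinatorial point. For the length inequality, I would invoke Lemma~\ref{lemma:wielandt}: $\sigma$ primitive means $M_\sigma$ primitive, so there is $r$ with $1 \le r \le (\#A)^2 - 2\#A + 2 \le (\#A)^2$ such that $M_\sigma^r$ has only positive entries, i.e.\ every letter occurs in $\sigma^r(a)$ for each $a \in A$. Then for all $n$ and all $a,b$ the word $\sigma^n(b)$ is a factor of $\sigma^n(\sigma^r(a)) = \sigma^{n+r}(a)$, so
\[
	|\sigma^n(b)| \le |\sigma^{n+r}(a)| = |\sigma^r(\sigma^n(a))| \le |\sigma^r|\,|\sigma^n(a)| \le |\sigma|^{r}\,|\sigma^n(a)| \le |\sigma|^{(\#A)^2}\,|\sigma^n(a)|,
\]
and taking the maximum over $b$ and the minimum over $a$ gives $|\sigma^n| \le |\sigma|^{(\#A)^2}\langle\sigma^n\rangle$.

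For linear recurrence I would use that $x$ is recurrent (being a primitive fixed point), fix $u \in \cL(x)$ with $m = |u|$, and note that a return word to $u$ is precisely the factor of $x$ between two consecutive occurrences of $u$; so it is enough to find $K_\sigma < |\sigma|^{4(\#A)^2}$, independent of $u$, bounding the distance between consecutive occurrences of $u$ by $K_\sigma m$. First choose $n$ minimal with $\langle\sigma^n\rangle \ge m$; for $m \ge 2$ then $\langle\sigma^{n-1}\rangle \le m-1$, so $\langle\sigma^n\rangle \le |\sigma|\langle\sigma^{n-1}\rangle < |\sigma| m$ and hence $|\sigma^n| < |\sigma|^{(\#A)^2+1}m$ by the first part (and this holds trivially for $m=1$, where $n=0$). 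Since $x = \sigma^n(x)$ and $m \le \langle\sigma^n\rangle$, any occurrence of $u$ in the cutting $x = \cdots\sigma^n(x_{-1})\sigma^n(x_0)\sigma^n(x_1)\cdots$ meets at most two consecutive blocks, so $u$ is a factor of $\sigma^n(w)$ for some $w \in \cL(x)$ with $|w| \le 2$.

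The crux is to show that such a short $w$ is a factor of $\sigma^{2r}(d)$ for some $d \in A$, and therefore (since $d$ occurs in every $\sigma^r(a)$) a factor of $\sigma^{3r}(a)$ for \emph{every} $a \in A$. When $w$ is a single letter this is immediate, as $w$ occurs in every image $\sigma^{2r}(d)$. When $w \in \cL_2(x)$, the idea is that every length-$2$ factor of $x$ is either internal to a block $\sigma^r(d)$ of the $\sigma^r$-cutting or straddles one of its junctions; in the latter case one desubstitutes the junction pair, and by aperiodicity (which prevents the junctions from being \`\`infinitely nested'') after one such step it lies inside a single block $\sigma^r(d)$, so a further application of $\sigma^r$ places $w$ inside $\sigma^{2r}(d)$. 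Granting this, $u$ is a factor of $\sigma^n(w) \subseteq \sigma^{n+3r}(a)$ for every $a$, so every block of the $\sigma^{n+3r}$-cutting of $x$ contains an occurrence of $u$; as these blocks have length at most $|\sigma^{n+3r}| = |\sigma|^{3r}|\sigma^n|$, consecutive occurrences of $u$ lie at distance at most
\[
	2\,|\sigma^{n+3r}| \le 2\,|\sigma|^{3r}|\sigma^n| < 2\,|\sigma|^{3r + (\#A)^2 + 1}m \le 2\,|\sigma|^{4(\#A)^2 - 6\#A + 7}m < |\sigma|^{4(\#A)^2}m,
\]
where the last two steps use $r \le (\#A)^2 - 2\#A + 2$, $\#A \ge 2$ and $|\sigma| \ge 2$; so $K_\sigma$ can be taken below $|\sigma|^{4(\#A)^2}$.

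I expect the main obstacle to be the bracketed claim about $\cL_2(x)$: controlling how length-$2$ factors sit at the boundaries of the blocks $\sigma^r(a)$, which is exactly where aperiodicity is genuinely used, and doing it carefully enough that only an absolute-constant number of extra iterations of $\sigma$ is needed — this is what makes the final exponent $4(\#A)^2$ rather than a polynomial in $\#A$ of larger degree. The rest, namely the length inequality and the bookkeeping, is routine.
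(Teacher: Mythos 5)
Your first inequality is fine and is essentially the paper's own argument (Wielandt's bound plus $|\sigma^n(b)|\le|\sigma^{n+r}(a)|\le|\sigma^r|\,|\sigma^n(a)|$). The genuine gap is exactly the step you flag as the crux: the claim that a length-$2$ factor $w$ of $x$ straddling a junction of the $\sigma^r$-cutting lies, after \emph{one} desubstitution, inside a single block, hence inside some $\sigma^{2r}(d)$. The desubstituted junction pair $x_ix_{i+1}$ is itself a length-$2$ factor of $x$, and nothing prevents it from straddling a junction of the $\sigma^r$-cutting in its turn; a priori this nesting can go on for many levels, and aperiodicity does not bound its depth (aperiodicity plays no role at this point of the paper either -- it is only used for the power-freeness statement via Theorem~\ref{theo:encad}). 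So the key combinatorial claim of your second part is unproved, and the justification you sketch would fail.

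The correct replacement is a finiteness/pigeonhole argument, which is precisely what the paper proves when it bounds its constant $R$: take a letter $a$ on which (a power of) $\sigma$ is prolongable; the set of length-$2$ words occurring in $\sigma^n(a)$ is nondecreasing in $n$, and if it ever stops growing it has stabilized at its final value, which by primitivity must be all length-$2$ words of $\cL(\sigma)$; since there are at most $(\#A)^2$ such words, every one of them occurs in $\sigma^n(a)$ for all $n>(\#A)^2$, and combining with the positivity of $M_\sigma^{(\#A)^2}$, in $\sigma^{2(\#A)^2}(b)$ for \emph{every} letter $b$. Substituting this (with exponent $2(\#A)^2$ in place of your $2r$, $3r$) into your scheme, the rest of your argument does go through: $u$ of length $m$ is a factor of $\sigma^n(w)$ with $|w|\le 2$ and $|\sigma^n|<|\sigma|^{(\#A)^2+1}m$, hence $u$ occurs in every block of the $\sigma^{n+2(\#A)^2}$-cutting, giving gaps at most $2|\sigma|^{3(\#A)^2+1}m<|\sigma|^{4(\#A)^2}m$, essentially the paper's bound. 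Note finally that your route differs in structure from the paper's: the paper does not reprove linear recurrence from scratch but cites Durand (1998) for $K_\sigma\le RN|\sigma|$, with $R$ the maximal length of a return word to a length-$2$ word, and only bounds $N$ and $R$; your more self-contained approach is legitimate, but it stands or falls with the combinatorial lemma above, which you must prove as the paper does rather than by the one-step desubstitution claim.
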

\begin{proof}
Durand \cite{Durand:1998} showed that the constant of linear recurrence $K_\sigma$ is at most equal to $RN|\sigma|$, where 
\begin{itemize}
\item
$N$ is a constant such that $|\sigma^n| \leq N \langle \sigma^n\rangle$ for all $n$;
\item
$R$ is the maximal length of a return word to a word of length $2$ in $\mathcal{L}(\sigma)$.
\end{itemize}
We only prove here that $N \leq |\sigma|^{(\#A)^2}$ and $R \leq 2|\sigma|^{2(\#A)^2}$. The constant of linear recurrence is thus at most $2 |\sigma|^{1+3(\# A)^2} < |\sigma|^{4(\#A)^2}$.

Let us write $d = \#A$.
By Lemma~\ref{lemma:wielandt}, the matrix $M_\sigma^{d^2}$ contains only positive entries.
For all $n \geq 0$ and all $a \in A$, we have $|\sigma^{n+d^2}(a)| = \sum_{b \in A} |\sigma^{d^2}(a)|_b|\sigma^n(b)| \geq |\sigma^n|$.
Since this is true for all $a$, we get $|\sigma^n| \leq \langle \sigma^{n+{d^2}}\rangle \leq |\sigma^{d^2}| \langle \sigma^n\rangle$, so $N \leq |\sigma^{d^2}|$.

Let $a \in A$ such that $\sigma$ is prolongable on $a$.
Thus for all $n$, any word that occurs in $\sigma^n(a)$ also occurs in $\sigma^{n+1}(a)$.
Let us show that for all $n > d^2$, any word $u \in \mathcal{L}(\sigma)$ of length $2$ occurs in $\sigma^n(a)$.
For all $n$, the words of length 2 that occur in $\sigma^{n+1}(a)$ occurs in images under $\sigma$ of the words of length 2 that occur in $\sigma^n(a)$. 
As any word occurring in $\sigma^n(a)$ also occurs in $\sigma^{n+1}(a)$, the words of length 2 that occurs in $\sigma^{n+1}(a)$ are those that occur in $\sigma^n(a)$ together with those occurring in the images under $\sigma$ of these words.
Thus, if there is a word of length 2 that does not occur in $\sigma^n(a)$, there is a sequence $(u_1,u_2,\dots,u_n)$ of words of length $2$ in $\mathcal{L}(\sigma)$ such that for all $i \leq n$, $u_i$ occurs in $\sigma^i(a)$ and does not occur in $\sigma^{i-1}(a)$.
Hence all words $u_1, \dots, u_n$ are distinct.
For $n > d^2$, this is a contradiction since there are at most $d^2$ words of length $2$ on the alphabet $A$.
Thus, for any letter $b \in A$, all words $u \in \mathcal{L}(\sigma)$ of length $2$ occur in $\sigma^{2d^2}(b)$.
We deduce that $R \leq 2|\sigma^{2d^2}|$.  
\end{proof}

\begin{proof}[Proof of Theorem~\ref{theo:main}]
We just have to make the computation.
Using Theorem~\ref{theo:encad}, Proposition~\ref{prop:sublinrec} and the notation of Theorem \ref{theo:maindetail}, we can take $d = \# A$ and we successively have 
\begin{eqnarray*}
	k 	&\leq & 1 + K_\sigma 	 \leq  |\sigma|^{4d^2} 	,
	\\
	N 	& \leq & |\sigma|^{d^2},
	\\
	R    & = & \lceil  N^2(k+1)+2N\rceil 	 \leq  |\sigma|^{2d^2} (|\sigma|^{4d^2}+1) + 2|\sigma|^{d^2}   \leq  2|\sigma|^{6d^2}	,
	\\
	Q 	&=& 1 + p_x(R) ,
	\left(
	\sum_{\frac{R}{N} \leq i \leq RN+2} p_x(i) 
	\right)  \leq  K_\sigma 2|\sigma|^{6d^2}
	\left(
	\sum_{0 \leq i \leq 2+2|\sigma|^{7d^2}} i K_\sigma
	\right)  \leq  6 |\sigma|^{28d^2}
\end{eqnarray*}
We finally get that the constant of recognizability of $\sigma$ is at most 
\[
	2 |\sigma|^{6d^2} |\sigma|^{6 d |\sigma|^{28d^2}} + |\sigma|^d 
	= 
	2 |\sigma|^{6d^2 + 6d |\sigma|^{28d^2}} + |\sigma|^d. 
\]
\end{proof}

\def\ocirc#1{\ifmmode\setbox0=\hbox{$#1$}\dimen0=\ht0 \advance\dimen0
  by1pt\rlap{\hbox to\wd0{\hss\raise\dimen0
  \hbox{\hskip.2em$\scriptscriptstyle\circ$}\hss}}#1\else {\accent"17 #1}\fi}

%
%\bibliographystyle{jis}
%\bibliography{auto.bib}

\begin{thebibliography}{10}

\bibitem{Canterini&Siegel:2001}
V.~Canterini and A.~Siegel, Geometric representation of substitutions of pisot
  type, {\em Trans. Amer. Math. Soc.} {\bf 353} (2001), 5121--5144.

\bibitem{Choffrut&Karhumaki:1997}
C.~Choffrut and J.~Karhum{\"a}ki, Combinatorics of words.
\newblock In {\em Handbook of formal languages, {V}ol.\ 1}, pp.  329--438.
  Springer, Berlin, 1997.

\bibitem{Durand:1998}
F.~Durand, A characterization of substitutive sequences using return words,
  {\em Discrete Math.} {\bf 179} (1998), 89--101.

\bibitem{Durand&Host&Skau:1999}
F.~Durand, B.~Host, and C.~Skau, Substitutive dynamical systems, {B}ratteli
  diagrams and dimension groups, {\em Ergod. Th. \& Dynam. Sys.} {\bf 19}
  (1999), 953--993.

\bibitem{Ehrenfeucht&Rozenberg:1978}
A.~Ehrenfeucht and G.~Rozenberg, Simplifications of homomorphisms, {\em Inform.
  and Control} {\bf 38}(3) (1978), 298--309.

\bibitem{Pytheas}
N.~P. Fogg, {\em Substitutions in dynamics, arithmetics and combinatorics},
  Vol. 1794 of {\em Lecture Notes in Mathematics}, Springer-Verlag, Berlin,
  2002.
\newblock Edited by V. Berth{\'e}, S. Ferenczi, C. Mauduit and A. Siegel.

\bibitem{Holton&Zamboni:1999}
C.~Holton and L.~Q. Zamboni, Descendants of primitive substitutions, {\em
  Theory Comput. Systems} {\bf 32} (1999), 133--157.

\bibitem{Horn&Johnson:1990}
R.~A. Horn and C.~R. Johnson, {\em Matrix analysis}, Cambridge University
  Press, Cambridge, 1990.
\newblock Corrected reprint of the 1985 original.

\bibitem{Kari&Rozenberg&Salomaa:1997}
L.~Kari, G.~Rozenberg, and A.~Salomaa, {${L}$} systems.
\newblock In {\em Handbook of formal languages, {V}ol.\ 1}, pp.  253--328.
  Springer, Berlin, 1997.

\bibitem{Klouda&Medkova:2016}
K.~Klouda and K.~Medkov{\'a}, Synchronizing delay for binary uniform morphisms,
  {\em Theoret. Comput. Sci.} {\bf 615} (2016), 12--22.

\bibitem{Kurka:2003}
P.~K{\ocirc{u}}rka, {\em Topological and symbolic dynamics}, Vol.~11 of {\em
  Cours Sp\'ecialis\'es [Specialized Courses]}, Soci\'et\'e Math\'ematique de
  France, Paris, 2003.

\bibitem{Mignosi&Seebold:1993}
F.~Mignosi and P.~S{\'e}{\'e}bold, If a {DOL} language is {$k$}-power free then
  it is circular.
\newblock In {\em Automata, languages and programming ({L}und, 1993)}, Vol. 700
  of {\em Lecture Notes in Comput. Sci.}, pp.  507--518. Springer, Berlin,
  1993.

\bibitem{Mosse:1992}
B.~Moss{\'e}, Puissances de mots et reconnaissabilit\'e des points fixes d'une
  substitution, {\em Theoret. Comput. Sci.} {\bf 99} (1992), 327--334.

\bibitem{Mosse:1996}
B.~Moss{\'e}, Reconnaissabilit{\'e} des substitutions et complexit{\'e} des
  suites automatiques, {\em Bull. Soc. Math. France} {\bf 124} (1996),
  329--346.

\bibitem{Queffelec:2010}
M.~Queff{\'e}lec, {\em Substitution dynamical systems---spectral
  analysis.Second Edition}, Vol. 1294 of {\em Lecture Notes in Mathematics},
  Springer-Verlag, Berlin, 2010.

\end{thebibliography}

\end{document}